\renewcommand{\maketag@@@}[1]{\hbox{\m@th\normalsize\normalfont#1}}%
\newcommand{\removelatexerror}{\let\@latex@error\@gobble}
\def\d{\,\mathrm{d}}
\newtheorem{theorem}{Theorem}
\newtheorem{corollary}{Corollary}
\newtheorem{lemma}{Lemma}
\newtheorem{myDef}{Definition}
\def\BibTeX{{\rm B\kern-.05em{\sc i\kern-.025em b}\kern-.08em
    T\kern-.1667em\lower.7ex\hbox{E}\kern-.125emX}}
\begin{document}

\title{\!\!Performance Optimization in RSMA-assisted Uplink xURLLC IIoT Networks with Statistical QoS Provisioning}
\author{Yuang Chen, \IEEEmembership{Graduate Student Member, IEEE}, Hancheng Lu, \IEEEmembership{Senior Member, IEEE}, \\ Chang Wu, \IEEEmembership{Graduate Student Member, IEEE}, Langtian Qin, and Xiaobo Guo.
\thanks{\setlength{\baselineskip}{1.5\baselineskip} Yuang Chen, Hancheng Lu, Chang Wu, and Langtian Qin are with the University of Science and Technology of China, Hefei 230027, China (email: yuangchen21@mail.ustc.edu.cn; hclu@ustc.edu.cn; \{changwu, qlt315\}@mail.ustc.edu.cn). Hancheng Lu is also with the Institute of Artificial Intelligence, Hefei Comprehensive National Science Center, Hefei 230088. Xiaobo Guo is with the National Key Laboratory of Advanced Communication Networks, Academy for Network $\&$ Communications of CETC, Shijiazhuang, China (email: umbrac177@163.com).}}
\maketitle{}

\begin{abstract}
Industry 5.0 and beyond networks have driven the emergence of numerous mission-critical applications, prompting contemplation of the neXt-generation ultra-reliable low-latency communication (xURLLC). To guarantee low-latency requirements, xURLLC heavily relies on short-blocklength packets with sporadic arrival traffic. As a disruptive multi-access technique, rate-splitting multiple access (RSMA) has emerged as a promising avenue to enhance quality of service (QoS) and flexibly manage interference for next-generation communication networks. In this paper, we investigate an innovative RSMA-assisted uplink xURLLC industrial internet-of-things (IIoT) (RSMA-xURLLC-IIoT) network. To unveil reliable insights into the statistical QoS provisioning (SQP) for our proposed network with sporadic arrival traffic, we leverage stochastic network calculus (SNC) to develop a dependable theoretical framework. Building upon this theoretical framework, we formulate the SQP-driven short-packet size maximization problem and the SQP-driven transmit power minimization problem, aiming to guarantee the SQP performance to latency, decoding, and reliability while maximizing the short-packet size and minimizing the transmit power, respectively. By exploiting Monte-Carlo methods, we have thoroughly validated the dependability of the developed theoretical framework. Moreover, through extensive comparison analysis with state-of-the-art multi-access techniques, including non-orthogonal multiple access (NOMA) and orthogonal multiple access (OMA), we have demonstrated the superior performance gains achieved by the proposed RSMA-xURLLC-IIoT networks.
\end{abstract}
\begin{IEEEkeywords}
next-generation ultra-reliable and low-latency communications (xURLLC), stochastic network calculus, rate-splitting multiple access (RSMA), industrial internet-of-things.
\end{IEEEkeywords}

\vspace{-0.95em}

\section{Introduction}
\par With the rapid evolution of mobile wireless network technologies, there has been a proliferation of numerous high-stack control and mission-critical emergent industrial applications, such as metaverse, robot control, and industrial automation \cite{park2022extreme,she2021tutorial}. These industrial applications have sparked unprecedented stringent quality of service (QoS) requirements, encompassing flexible interference management, robust multi-access, impeccable reliability, and resilient low latency, prompting people's envisioning for neXt-generation ultra-reliable and low-latency communications (xURLLC) \cite{park2022extreme,10382447,Yuang2023When,chen2024enhancing,10021621}. xURLLC stands as a pivotal cornerstone within the wireless IIoT ecosystems, the heightened anticipation regarding xURLLC has accelerated the development and customized standardization of Industry 5.0 and beyond networks \cite{xian2023advanced,she2021tutorial,park2022extreme}.

\vspace{-0.1em}

\par In IIoT scenarios requiring precision critical to mission success, base station (BS) and IIoT devices are typically deployed within confined environments. These IIoT devices transmit short-packet data, awakened periodically by internal events and sporadically triggered by external events, via the uplink to BS \cite{khoshnevisan20195g}. The BS seamlessly forwards these data to the data fusion center (DFC), where actionable insights are promptly derived and executed. Effectively meeting xURLLC's low-latency requirements imperatives necessitates the organization of extensive short-packet data transmissions within dynamic wireless networks \cite{chen2024enhancing,Yuang2023When,10382447}. However, under Shannon regimes, the decoding error probability (DEP) associated with long blocklength channel codes virtually approaches zero \cite{polyanskiy2010channel,yang2014quasi,10382447}, rendering the maximum achievable rate (MAR) inadequate for characterizing xURLLC's short-packet data transmissions. To address this theoretical limitation, finite blocklength coding (FBC) theory has been advanced, revealing MARs for short-packet data transmissions and elucidating the intricate tradeoff between DEP and target latency \cite{polyanskiy2010channel,yang2014quasi,10382447,Yuang2023When}. This tradeoff is influenced not only by blocklength but also by factors such as transmit power, inter-user interference, and transmission rate. Therefore, to guarantee xURLLC's QoS requirements, there arises an urgent need for robust multi-access techniques that provide more flexible interference management, higher spectrum efficiency (SE), and broader coverage.

\vspace{-0.1em}

\par Rate-splitting multiple access (RSMA), positioned as a candidate technique for next-generation wireless networks, has garnered widespread attention owing to its flexible interference management, effective resource sharing, and superior SE \cite{mao2022rate, clerckx2023primer, mishra2022rate}. In uplink RSMA scenarios, users split their transmitted messages into multiple streams, allocating suitable transmit power to each stream based on specific power allocation schemes \cite{rimoldi1996rate}. At the BS receiver, successive interference cancellation (SIC) is performed to decode each stream in designated orders, subsequently combining and reconstructing these streams. Through message splitting, RSMA facilitates flexible management of inter-user interference in the uplink without resorting to time sharing among users to achieve capacity \cite{mao2022rate, clerckx2023primer, mishra2022rate}. Remarkably, non-orthogonal multiple access (NOMA) falls within the subset of RSMA, with the distinction that NOMA does not involve the splitting of users' messages. RSMA's message splitting mechanism naturally aligns with core services (e.g., xURLLC) with sporadic access request behavior.

\vspace{-1.0em}

\subsection{Motivations and Challenges}

\vspace{-0.3em}

\par Although RSMA has demonstrated promise in enhancing the QoS in next-generation wireless networks, recent research has predominantly concentrated on downlink multi-antenna and multi-access scenarios under Shannon regimes with long blocklengths \cite{li2024synergizing,dizdar2024rate,lei2024secure}. Studies reveal that uplink RSMA can realize the optimal rate region of a $2M$-user Gaussian Multiple Access Channel (MAC) by exploiting up to $4M-1$ virtual point-to-point Gaussian channels generated through message splitting, thereby providing superior reliability, sum-throughput, and lower latency compared to non-orthogonal multiple access (NOMA) \cite{10430407} and orthogonal multiple access (OMA) \cite{rimoldi1996rate}. Furthermore, most studies on uplink RSMA are conducted under Shannon regimes with long blocklengths \cite{khisa2024power, sarker2023uplink,9852986}, while the exploration of RSMA-assisted xURLLC under FBL regimes is still in its infancy, accompanied by various challenges.

\vspace{-0.1em}

\par On the one hand, the stringent low-latency demands of IIoT devices necessitate xURLLC to operate under FBL regimes, underscoring the need to investigate the DEP and total throughput of uplink RSMA-assisted xURLLC networks. However, RSMA's flexible interference management, involving message splitting and combining, potentially introduces complexity to networks and poses challenges for seamless xURLLC integration \cite{chen2024enhancing, clerckx2023primer, mao2022rate}. On the other hand, previous research endeavors on xURLLC have predominantly focused on deterministic QoS provisioning (DQP) mechanisms, which are insufficient for analyzing the QoS requirements of extreme and rare events \cite{Yuang2023When,park2022extreme,10382447,chen2023streaming,she2021tutorial,bennis2018ultrareliable}. Specifically, xURLLC demands $99.9999\%$ reliability, sub-1 \emph{ms} latency, statistical QoS provisioning (SQP), and relies on short-blocklength packets with sporadic arrival traffic, presenting unparalleled challenges to the existing Industrial Internet \cite{Yuang2023When,park2022extreme}. Moreover, given the highly time-varying nature of wireless fading channels, DQP fails to accurately capture the queueing behavior of xURLLC traffic \cite{Yuang2023When,10382447,chen2023streaming}. This deviates from the fundamental design principles mandated by xURLLC, which involve analyzing the tail distribution of stringent QoS, underscores the need for a paradigm shift. Consequently, a thorough understanding of statistical metrics associated with rare and extreme events, known as the tail distribution, is essential to accommodate xURLLC's stringent QoS requirements \cite{Yuang2023When,10382447,chen2023streaming}.

\vspace{-0.2em}

\par Although embracing the SQP mechanism for xURLLC appears more feasible, there is a dearth of relevant studies in this area, leaving the underlying principles enigmatic \cite{she2021tutorial,park2022extreme, 10382447,Yuang2023When,bennis2018ultrareliable,chen2023streaming}. Currently, SQP theory championed by stochastic network calculus (SNC), has gained extensive traction \cite{al2014network,fidler2010survey,fidler2014guide}. SNC has evolved into a prominent methodology providing dependable theoretical insights into the SQP performance of latency-sensitive services \cite{Yuang2023When,10382447,chen2023streaming}. Nevertheless, current endeavors fall short of providing sufficient guidance for xURLLC's SQP, and existing state-of-the-art solutions for xURLLC operate independently, lacking direct compatibility and seamless integration \cite{singh2023rsma,ou2022resource,wang2023flexible,kurma2022urllc,muhammad2021mission,lien2022intelligent}. Therefore, tailoring an innovative RSMA-assisted transmission framework for xURLLC looms large.

\vspace{-1em}

\subsection{Related Work}
\par State-of-the-art research endeavors have extensively focused on uplink RSMA under Shannon regimes with long blocklengths \cite{khisa2024power, sarker2023uplink,9852986,10330667}, and relatively little research has focused on uplink RSMA under FBL regimes \cite{9970313}, let alone considering the SQP-based tail analysis for uplink RSMA-assisted xURLLC networks. The authors in \cite{khisa2024power} have investigated a two-user uplink cooperative RSMA (C-RSMA) scheme, which jointly optimizes beamforming and device transmit power. In \cite{sarker2023uplink}, authors have tackled the optimization problem of maximizing the minimum SE of uplink RSMA-assisted user-centric massive MIMO networks. To enhance SE and user fairness, the authors in \cite{9852986} have proposed a novel C-RSMA scheme for uplink user cooperation. Additionally, the authors in \cite{10330667} have studied the optimization problems of power allocation and decoding order, and have proposed a low-complexity user-pair-based resource allocation algorithm to minimize the maximum latency. To study the impact of target rate and blocklength on the DEP and throughput performance, the authors in \cite{9970313} have investigated the performance of uplink RSMA under FBL regimes.

\par xURLLC underscores extreme and rare events such as 99.9999\% reliability and sub-1 \emph{ms} latency \cite{park2022extreme, 10323296}. Currently, some relevant studies have explored the statistical QoS provisioning for xURLLC \cite{Yuang2023When,10382447,chen2023streaming,9887634,10220199,10445487}. In \cite{Yuang2023When}, we have proposed a NOMA-assisted uplink xURLLC network and analyzed the tail distributions of latency, age-of-information (AoI), and reliability of xURLLC leveraging SNC theory. To explore xURLLC's fundamentals and performance tradeoffs, we have developed an xURLLC-enabled massive MIMO network and proposed a theoretical framework for statistical QoS provisioning analysis for xURLLC by leveraging and promoting SNC theory \cite{10382447}. In \cite{chen2023streaming}, the authors have proposed a 360$^{\circ}$ virtual reality (VR) streaming architecture with statistical QoS provisioning by developing an SNC-based SQP theoretical framework from delay and rate perspectives. To fulfill QoS guarantees for URLLC slicing, the authors in \cite{9887634} have proposed an SNC-based URLLC slicing allocation scheme, which effectively ensures the QoS violation probability remains consistently lower than the target value over the long term with a certain probability. To investigate the violation probability of end-to-end (E2E) latency for target services in IIoT, the authors in \cite{10220199} have introduced an SNC-based cascaded theorem to evaluate the service capability of xURLLC networks and studied the service processes across various wireless fading channels.

\vspace{-1.2em}

\subsection{Main Contributions}

\vspace{-0.1em}

\par To effectively overcome the aforementioned challenges, we have developed an innovative RSMA-assisted uplink xURLLC IIoT (RSMA-xURLLC-IIoT) network architecture under FBL regimes and imperfect channel-state-information (CSI) scenarios. To accurately capture the queueing behavior of sporadic xURLLC traffic, we have proposed an SNC-based SQP (SNC-SQP) theoretical framework to analyze the tail distribution of xURLLC's QoS requirements. Building upon this theoretical framework, we have formulated two SQP-driven optimization problems tailored for the developed RSMA-xURLLC-IIoT network architecture and proposed a low-complexity algorithm to tackle them. Extensive simulations demonstrate the dependability of the proposed theoretical framework, and comprehensive comparisons validate the superior performance of our developed RSMA-xURLLC-IIoT network architecture. The primary contributions of this paper are summarized as follows:

\vspace{-0.2em}

\begin{itemize}
  \item We have developed an innovative RSMA-xURLLC-IIoT network architecture, which seamlessly bridges the integration gaps between RSMA and xURLLC under FBL regimes and imperfect CSI scenarios, providing flexible interference management, robust multi-access, impeccable reliability, and resilient low latency.

  \item Leveraging SNC theory, we have proposed a dependable SNC-SQP theoretical framework. In particular, a novel terminology named statistical delay violation probability (SDVP) is introduced to access the SQP performance of xURLLC's delay. Subsequently, we derive a closed-form upper-bounded expression for SDVP termed UB-SDVP to quantify SDVP accurately. Additionally, we have derived various closed-form expressions for the probability density functions (PDFs) of wireless channels involved in our developed RSMA-xURLLC-IIoT network architecture.

  \item Building upon the proposed theoretical framework, we have delved into two critical optimization problems. The first one addresses the SQP-driven short-packet size maximization problem, which holds extraordinary significance for the IIoT since it guarantees not only the execution of complex machine instructions but also the accomplishment of coherent mission-critical tasks. The second problem deals with the SQP-driven transmit power minimization, which is also of great importance, particularly with the application prospect of massive xURLLC networks and battery-constrained IIoT devices. Then, we have proposed a low-complexity three-step sequential optimization algorithm (TSOO) for their efficient resolutions.

  \item Extensive simulations have reliably demonstrated the effectiveness of our developed RSMA-xURLLC-IIoT network architecture. Comprehensive comparisons with results from Monte-Carlo methods rigorously validate the dependability of the proposed SNC-SQP theoretical framework. Furthermore, through comparisons with prevalent NOMA and OMA, we have further substantiated the remarkable performance gains achieved by our developed network architecture.
\end{itemize}

\vspace{-0.6em}

\par The remainder of this paper is organized as follows. In Sec. II, the RSMA-xURLLC IIoT network architecture is investigated. In Sec. III, the SNC-SQP theoretical framework is developed. In Sec. IV, two optimal SQP-driven optimization problems are formulated and addressed, followed by the performance evaluation and comparisons. Finally, a conclusion of this paper is given.

\vspace{-2em}

\section{RSMA-Assisted Uplink xURLLC IIoT Networks Architecture}

\vspace{-0.4em}

\par As illustrated in Fig. \ref{fig1}, we consider an RSMA-assisted uplink IIoT network architecture accessed with multiple IIoT devices, where $2M$ single-antenna IIoT devices simultaneously transmit short-packet data to a base station (BS) equipped with a single antenna. These $2M$ IIoT devices are randomly grouped into $M$ pairs, each pair comprising two IIoT devices. The available spectrum is evenly divided into $M$ orthogonal subchannels, with each assigned to an IIoT device pair. Without loss of generality, we focus on a specific IIoT device pair designated as $\mathcal{U} \!=\! \left\{1,2\right\}$ \footnote{To facilitate the analysis of the SQP performance for the developed RSMA-xURLLC-IIoT network architecture, without loss of generality, this paper considers the single-antenna setup, which is consistent with the most prevailing uplink RSMA-assisted IIoT systems \cite{9970313,10330667, 9852986, singh2023rsma}. While the multi-antenna setup enhances the performance of our developed RSMA-xURLLC-IIoT network architecture, it leads to the increased complexity of IIoT system design, which is also beyond the scope of this article. However, it is worth noting that the network architecture considered in this paper can be reasonably extended to multi-antenna scenarios, with recent references expected to provide valuable insights into this aspect\cite{khisa2024power,sarker2023uplink,ou2022resource,10382447}.}.

\vspace{-1em}

\begin{figure*}[t]
\vspace{-0.5em}
\centering
\includegraphics[scale=0.7]{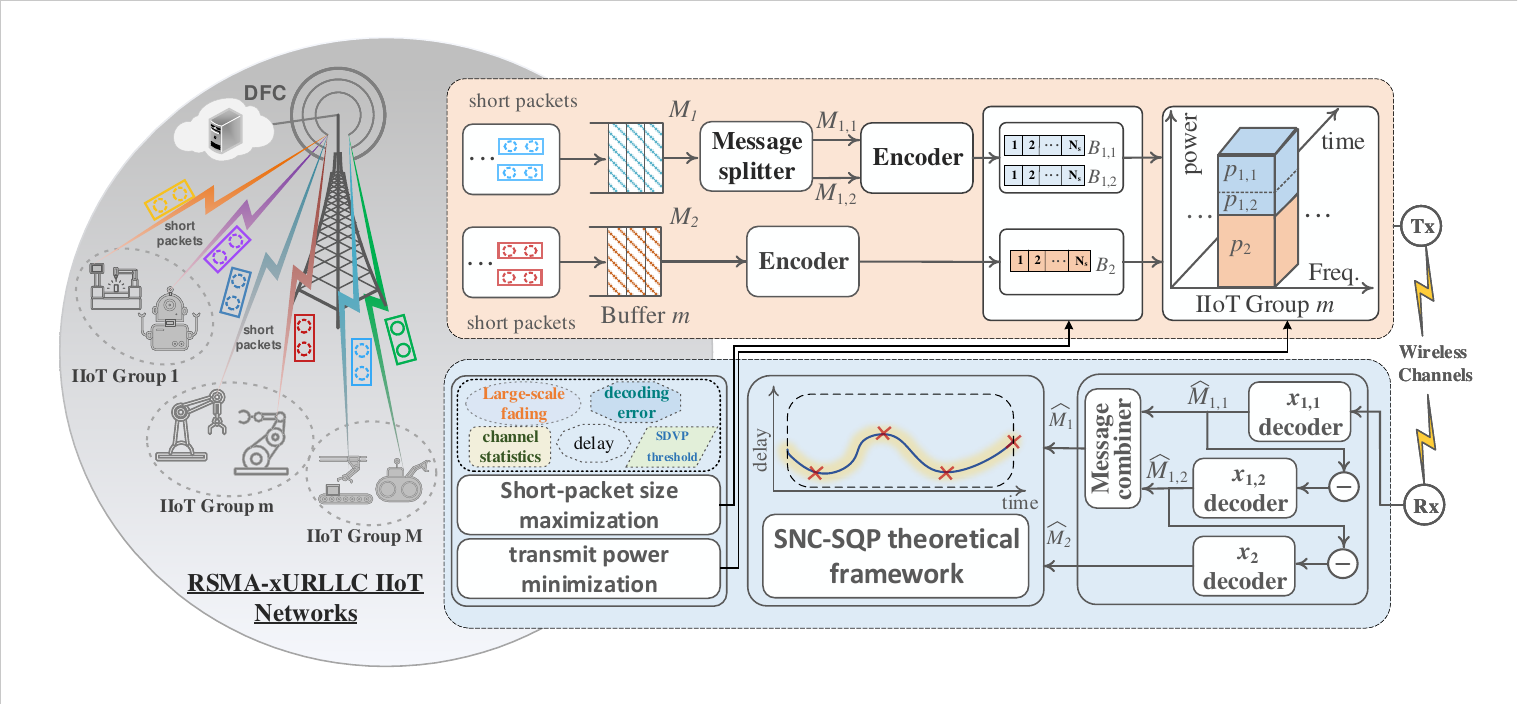}
\caption{The RSMA-assisted uplink xURLLC IIoT network architecture.}
\label{fig1}
\vspace{-0.5em}
\end{figure*}

\subsection{RSMA for Uplink xURLLC with Imperfect CSI}

\vspace{-0.2em}

\par At the transmitter side, the message from IIoT device-$u$ at time slot $t$ is represented as $M_{u}(t), u \!\in\! \mathcal{U}$. Following RSMA principles \cite{clerckx2023primer,mao2022rate}, the message $M_{1}(t)$ from IIoT device-$1$ is split into two components, denoted as $M_{1,k}(t), k \!\in\! \left\{1,2\right\}$. The message $M_{q}(t)$ can be first encoded into the stream $x_{q}(t)$, where $q \in \mathcal{Q} \triangleq \{(1,1),(1,2),(2)\}$. These streams are mapped into finite-length codewords and transmitted over wireless fading channels. The received signal at the BS can be given as follows:

\begin{equation}\label{e1}
\begin{aligned}
  y(t) & =\left(\sqrt{p_{1,1}}x_{1,1}(t) + \sqrt{p_{1,2}}x_{1,2}(t)\right)g_{1}(t),\\
       &\quad + \sqrt{p_{2}}x_{2}(t)g_{2}(t) + n(t),
\end{aligned}
\end{equation}
where $p_{q}$ is the uplink transmit power of $x_{q}$, $g_{u}(t) \!\!=\!\! \sqrt{\zeta_{u}(t)}h_{u}(t)$ represents the channel coefficient of IIoT device-$\!u \!\in\! \mathcal{U}$, and $n(t) \! \sim \! \mathcal{CN}\left(0,\sigma_{n}^{2}\right)$ is the Additive White Gaussian Noise. $\zeta_{u}(t)$ and $h_{u}(t)$ denote the large-scale fading coefficient and small-scale fading coefficient, respectively. 

\par In realistic industrial scenarios, channel estimation is essential since IIoT devices do not have perfect prior knowledge of channel states, and the actual channel coefficient $h_{u}(t)$ can not be perfectly known. We consider each IIoT device-$u$ transmits a known orthogonal pilot sequence of length $N_{p,u}$ to the BS at the beginning of each time slot. As a result, the BS can obtain the minimum mean squared error (MMSE) channel estimation $\hat{h}_{u}(t)$ for IIoT device-$u$. Given the obtained $\hat{h}_{u}(t)$, the $h_{u}(t)$ can be given as

\begin{equation}\label{e2}
   h_{u}(t) = \hat{h}_{u}(t) + \tilde{e}_{u}(t), u \in \mathcal{U},
\end{equation}
where $\tilde{e}_{u}(t) \! \sim \! \mathcal{CN}\left(0,\sigma_{e,u}^{2}\right)$, with $\sigma_{e,u}^{2} \!=\! \frac{1}{1+\bar{\gamma}_{p,u}(t)N_{p,u}}$, and $\bar{\gamma}_{p,u}(t)$ denotes the actual signal-noise-ratio (SNR) of IIoT device-$u$ during the training phase, which is constant and known at both transmitter and receiver. The channel estimation $\hat{h}_{u}$ is Gaussian distributed as $\hat{h}_{u} \!\sim\! \mathcal{CN}\left(0,\rho_{u}^{2}\right)$ with $\rho_{u}^{2} \!=\! \bar{\gamma}_{p,u}N_{p,u}/\left(1 + \bar{\gamma}_{p,u}N_{p,u}\right)$. Then, the actual SNR $\Gamma_{q} = \bar{\gamma}_{q}|h_{u}|^{2}$ can be given as follows:

\begin{equation}\label{e3}
   \Gamma_{q} = \bar{\gamma}_{q}|\hat{h}_{u}|^{2} + 2\bar{\gamma}_{q}|\hat{h}_{u}|\Re\left\{e^{-j\angle\left(\hat{h}_{u}\right)}\tilde{e}_{u}\right\} + \bar{\gamma}_{q}|\tilde{e}_{u}|^{2},
\end{equation}
where operators $\Re\{\cdot\}$ and $\angle\{\cdot\}$ represent the real part and phase of the complex number, respectively. The term $\Re\left\{e^{-j\angle\left(\hat{h}_{u}\right)}\tilde{e}_{u}\right\}$ follows a Gaussian distribution with variance $\sigma_{e,u}^{2}/2$, and the term of estimation error $\bar{\gamma}_{q}|\tilde{e}_{u}|^{2}$ becomes quite small after sufficient pilot training, i.e., $|\tilde{e}_{u}|\ll|\hat{h}_{u}|$, which can be reasonably neglected in (\ref{e3}). In this case, the distribution of $\Gamma_{q}$ can be approximated as $\Gamma_{q} \sim \mathcal{N}\left(\widehat{\gamma}_{q},\widehat{\sigma}_{q}\right)$, where $\widehat{\gamma}_{q} = \bar{\gamma}_{q}|\hat{h}_{u}|^{2}$, $\widehat{\sigma}_{q} = 2\bar{\gamma}_{u}^{2}|\hat{h}_{u}|^{2}\sigma_{e,u}^{2}$, and $\bar{\gamma}_{q} = p_{q}\zeta_{u}/\sigma_{n}^{2}$ denotes the average SNR of stream $x_{q}$. If $q \in \big\{(1,1),(1,2)\big\}$, then $u = 1$; otherwise, $u = 2$. The blocklength used for short-packet communications and channel estimation is typically extremely limited.
We consider that the blocklength of each time slot is $N_{0}$, then only $N_{d} \!=\! N_{0} - N_{p,1} - N_{p,2}$ channel uses (CUs) available for short-packet data transmission.

\par At the receiver side, we consider that the streams are decoded in the order of $x_{1,1} \!\rightarrow \!x_{1,2} \!\rightarrow\! x_{2}$\footnote{The decoding order can be  categorized into six cases, including (1) $x_{1,1} \!\rightarrow \! x_{2} \! \rightarrow \! x_{1,2}$, (2) $x_{1,2} \!\rightarrow \!x_{2} \!\rightarrow\! x_{1,1}$, (3) $x_{1,1} \!\rightarrow \!x_{1,2} \!\rightarrow\! x_{2}$, (4) $x_{1,2} \!\rightarrow\! x_{1,1} \!\rightarrow\! x_{2}$, (5) $x_{2} \!\rightarrow\! x_{1,1} \!\rightarrow\! x_{1,2}$, (6) $x_{2} \!\rightarrow\! x_{1,2} \!\rightarrow\! x_{1,1}$. Without loss of generality, we select the decoding order (3) in this paper.}. Accordingly, $x_{1,1}(t)$ is first decoded by treating $x_{1,2}(t)$ and $x_{2}(t)$ as noise. Then, the SINR of $x_{1,1}(t)$ is represented as

\begin{equation}\label{e4}
  \widehat{\Gamma}_{1,1}(t) = \frac{\Gamma_{1,1}(t)}{\Gamma_{1,2}(t) + \Gamma_{2}(t) + 1}.
\end{equation}

\par If $\!x_{1,1}(t)\!$ is decoded successfully, it can be reconstructed into $\!\widehat{M}_{1,1}(t)$, and then removed from the received signal $y(t)$. Secondly, $x_{1,2}(t)$ is decoded by treating $\!x_{2}(t)\!$ as noise. Thus, the SINR of $x_{1,2}(t)$ can be expressed as

\begin{equation}\label{e5}
   \widehat{\Gamma}_{1,2}(t) = \frac{\Gamma_{1,2}(t)}{\Gamma_{2}(t) + 1}.
\end{equation}

\par If $x_{1,2}(t)\!$ is decoded successfully, it can also be reconstructed into $\!\!\widehat{M}_{1,2}(t)$, and removed from $y(t)$. Using a message combiner, the estimated message of IIoT device-$1$ is $\!\widehat{M}_{1}(t) \!=\! \widehat{M}_{1,1}(t) \!+\! \widehat{M}_{1,2}(t)$. Finally, the SINR of $x_{2}(t)$ can be given as follows:
\begin{equation}\label{e6}
  \widehat{\Gamma}_{2}(t) = \Gamma_{2}(t).
\end{equation}

\par If $x_{2}(t)$ is decoded successfully, it can be reconstructed into $\widehat{M}_{2}(t)$.

\vspace{-1.0em}

\subsection{Short-packet Data Communications Model}
\vspace{-0.3em}
\par Considering the short-packet size of $x_{q}$ to be $B_{q}$ bits, typically $B_{q} \!\leq\! 500$ bits, $q \! \in \!\mathcal{Q}$. Then, the achievable coding rate can be expressed as $R_{q} = \frac{B_{q}}{N_{d}}$ (bpcu). Following the finite blocklength coding (FBC) theory \cite{10382447,Yuang2023When,polyanskiy2010channel}, the decoding error probability can be given by

\vspace{-1.3em}
\begin{equation}\label{e7}
   \begin{aligned}
    \bar{\boldsymbol{\epsilon}}_{q}\left(\boldsymbol{\mathrm{B}},\boldsymbol{\mathrm{p}}\right) & = \mathbb{E}_{\widehat{\Gamma}_{q}}\left[Q\left(\frac{\ln\left(1 + \widehat{\Gamma}_{q}(t)\right)-\frac{B_{q}}{N_{d}}\ln2}{\sqrt{\frac{1}{N_{d}}\left(1 - \frac{1}{(1 + \widehat{\Gamma}_{q})^{2}}\right)}}\right)\right]\\
    & \triangleq \mathbb{E}_{\widehat{\Gamma}_{q}}\bigg[Q\big(f(\boldsymbol{\mathrm{B}},\boldsymbol{\mathrm{p}})\big)\bigg].
   \end{aligned}
\end{equation}
where $Q(\cdot)$ denotes the Q-function, which can be expressed by $Q(x) = \int_{x}^{\infty} \frac{1}{\sqrt{2\pi}}e^{-x^{2}/2}\mathrm{d}x$. According to RSMA principles \cite{chen2024enhancing,clerckx2023primer,mao2022rate}, since the rate of the IIoT device-$1$ $R_{1}$ is split, the achievable data rate of $x_{1,1}(t)$ and $x_{1,2}(t)$ can be represented as $R_{1,1}(t) \!=\! \alpha R_{1}(t)$ and $R_{1,2}(t) \!=\! (1-\alpha)R_{1}(t)$, respectively, where $0 \leq \alpha \leq 1$ denotes the rate allocation ratio \cite{clerckx2023primer,mao2022rate}.


\vspace{-0.5em}

\section{SNC-Based SQP Theoretical Framework}

\vspace{-0.2em}

\par In this section, we develop a dependable SNC-SQP theoretical framework. In particular, we introduce a novel terminology termed SDVP to characterize the tail distribution of delays. We derive the closed-form expressions for the PDFs of $\widehat{\Gamma}_{1,1}$, $\widehat{\Gamma}_{1,2}$, and $\widehat{\Gamma}_{2}$, and subsequently induced the closed-form expression of UB-SDVP.

\vspace{-0.5em}

\subsection{Statistical Delay Violation Probability}

\par To facilitate system-level analysis, we consider a statistical QoS-driven queueing system that employs the first-come-first-serve (FCFS) policy \cite{10382447,chen2023streaming,fidler2014guide}. The cumulative arrival, service, and departure processes of $x_{q}$ within the time interval $[s,t)$ are given as $A_{q}(s,t) \!\triangleq\! \sum_{i = s}^{t-1}a_{q}(i)$, $S_{q}(s,t) \!\triangleq\! \sum_{i = s}^{t-1}s_{q}(i)$, and $D_{q}(s,t) \!\triangleq\! \sum_{i = s}^{t-1}d_{q}(i)$, respectively. Here, $a_{q}(i)$ and $s_{q}(i)$ represent the amount of short-packet data generated by $x_{q}(i)$ and transmitted over wireless channels at time slot $i$, respectively, while $d_{q}(i)$ indicates the amount of short-packet data of $x_{q}(i)$ that is successfully received at time slot $i$. Then, to facilitate the analysis of SQP performance, we introduce a succinct and novel operator by leveraging moment generating function (MGF)-based SNC theory, which is termed \emph{min-deconvolution} $\widehat{\oslash}$ as follows:

\begin{myDef}
  (min-deconvolution $\widehat{\oslash}$): The min-deconvolution between $A_{q}(s,t)$ and $S_{q}(s,t)$ for stream $x_{q}, q\!\in\!\mathcal{Q}$ can be given as follows:
  \begin{equation}\label{e8}
     \mathrm{\mathbf{M}}_{A_{q}\widehat{\oslash}S_{q}}\!(\theta_{q},s,t) = \!\! \sum\limits_{v=1}^{\min\{s,t\}}\!\!\!\!\mathbb{M}_{A_{q}}(\theta_{q},v,t)\cdot \overline{\mathbb{M}}_{S_{q}}(\theta_{q},v,s),
  \end{equation}
  where the parameter $\theta_{q} \!\! >\!\! 0$ denotes the statistical QoS exponent of $x_{q}$. $\mathbb{M}_{A_{q}}(\theta_{q},v,t)\!$ and $\overline{\mathbb{M}}_{S_{q}}(\theta_{q},v,t)\!$ refer to the MGF of $A_{q}$ and inverse-MGF of $S_{q}$, respectively \footnote{ In SNC theory \cite{10382447,fidler2014guide}, $\theta_{q}$ is exploited to characterize the decay rate of the queue length for statistical QoS-driven queueing systems. A larger $\theta_{q}$ corresponds to $x_{q}$ having stricter statistical QoS requirements. Conversely, a smaller $\theta_{q}$ implies looser statistical QoS requirements \cite{10382447,fidler2014guide}. Given a random process $U(s,t), 0\leq s\leq t$, the MGF of $U(s,t)$ is denoted as $\mathbb{M}_{U}(\theta,s,t) = \mathbb{E}\left[e^{\theta U(s,t)}\right]$, and the inverse-MGF of $U(s,t)$ is $\overline{\mathbb{M}}_{U}(\theta,s,t) = \mathbb{E}\left[e^{-\theta U(s,t)}\right]$ \cite{chen2023statistical,fidler2014guide}.}.
\end{myDef}

\par Based on the operator \emph{min-deconvolution}, we have the following \emph{Theorem 1}:

\begin{theorem}
  Given $\!A_{q}(s,t)$ and $S_{q}(s,t)$, the SDVP of $x_{q}, q\!\in\!\mathcal{Q}$ can be expressed as follows:
  \vspace{-0.4em}
  \begin{equation}\label{e9}
     \mathbb{P}\left(W_{q}(t) \!>\! w_{q}^{th}\right) \! \leq \! \inf_{\theta_{q} > 0} \mathrm{\mathbf{M}}_{A_{q}\widehat{\oslash}S_{q}}(\theta_{q},t+w_{q}^{th},t),
     \vspace{-0.8em}
  \end{equation}
  where $W_{q}(t)$ and $w_{q}^{th}$ denote the actual delay and target delay of $x_{q}$, respectively.
\end{theorem}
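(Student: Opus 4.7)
The plan is to establish the SDVP bound by combining three classical ingredients from MGF-based stochastic network calculus: (i) a sample-path characterization of the virtual delay in terms of the arrival and service processes, (ii) a Chernoff-type exponential bound controlled by the statistical QoS exponent $\theta_q > 0$, and (iii) a union bound over the possible busy-period start epoch, which is precisely the mechanism that produces the summation appearing in the \emph{min-deconvolution} operator $\widehat{\oslash}$ introduced in Definition 1.

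First, I would invoke the standard FCFS lossless relation between the cumulative arrival process $A_q(s,t)$, the service process $S_q(s,t)$, and the virtual delay $W_q(t)$. Under FCFS service of stream $x_q$, the data committed by time $t$ still departs later than $t+w_q^{th}$ exactly when, for some earlier epoch $v\le \min\{t,\,t+w_q^{th}\}$, the service delivered on $[v,\,t+w_q^{th})$ fails to absorb the arrivals on $[v,\,t)$. This rewrites the target event as
\[
\{W_q(t) > w_q^{th}\} \;\subseteq\; \bigcup_{v=1}^{\min\{t,\,t+w_q^{th}\}}\!\!\bigl\{A_q(v,t) \;>\; S_q(v,\,t+w_q^{th})\bigr\}.
\]
With $s\triangleq t+w_q^{th}$, since $w_q^{th}\ge 0$ the index range collapses to $v=1,\dots,\min\{s,t\}$, matching (\ref{e8}) exactly.

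Second, I would apply a union bound across $v$ and, on each summand, the Chernoff inequality with parameter $\theta_q>0$,
\[
\mathbb{P}\bigl(A_q(v,t)-S_q(v,s) > 0\bigr) \;\le\; \mathbb{E}\bigl[e^{\theta_q A_q(v,t)}\bigr]\cdot \mathbb{E}\bigl[e^{-\theta_q S_q(v,s)}\bigr],
\]
where the factorization uses the statistical independence of the higher-layer arrival process $A_q$ and the physical-layer service process $S_q$. By Definition 1, this is $\mathbb{M}_{A_q}(\theta_q,v,t)\cdot\overline{\mathbb{M}}_{S_q}(\theta_q,v,s)$, and summing over $v$ reproduces $\mathrm{\mathbf{M}}_{A_q\widehat{\oslash}S_q}(\theta_q,s,t)$ verbatim. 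Since the inequality is valid for every admissible $\theta_q>0$, taking the infimum tightens the bound and yields (\ref{e9}).

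The main obstacle I anticipate is not the mechanics of the union-and-Chernoff steps, which are routine, but the rigorous justification of the sample-path inclusion in the first step for the sporadic, bursty xURLLC traffic considered here, together with careful bookkeeping of the time indices so that the summation endpoint $\min\{s,t\}=t$ is preserved under the identification $s=t+w_q^{th}$. A secondary subtlety is the decoupling of $\mathbb{M}_{A_q}$ and $\overline{\mathbb{M}}_{S_q}$: since $S_q$ in this paper is shaped by the FBC decoding errors and the RSMA SIC structure at the physical layer while $A_q$ is driven by an external sporadic arrival model, the independence used to factorize the joint MGF must be argued explicitly from the system assumptions rather than invoked as folklore.
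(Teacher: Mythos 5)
Your proposal is correct and follows essentially the same route as the paper's Appendix A: the paper invokes the standard FCFS delay--deconvolution relation, applies Chernoff's bound with the QoS exponent $\theta_{q}$, and then bounds the MGF of the resulting supremum by the sum of products $\mathbb{M}_{A_{q}}(\theta_{q},v,t)\cdot\overline{\mathbb{M}}_{S_{q}}(\theta_{q},v,s)$, which is exactly your union-bound-then-Chernoff-termwise derivation in a slightly different order. Your explicit attention to the independence of $a_{q}$ and $s_{q}$ needed for the factorization is a point the paper only makes explicit later (in Appendix C), so no gap.
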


\begin{proof}
   The proof of \textbf{Theorem 1} is given in Appendix A.
\end{proof}

\vspace{-0.5em}

\subsection{The Closed-Form Expression of UB-SDVP}

\par The promotion of MGF-SNC theory in \textbf{Theorem 1} provides a crucial theoretical foundation for further deriving the closed-form expression of UB-SDVP for the developed RSMA-xURLLC-IIoT network architecture. We consider $A_{q}(s,t)$, $S_{q}(s,t)$, and $D_{q}(s,t)$ have independent and identically distributed (i.i.d.) increments. Then, the MGF and the inverse-MGF of $A_{q}(s,t)$ and $S_{q}(s,t)$ can be respectively expressed as follows:

\vspace{-1em}

\begin{subequations}\label{t1}
   \begin{align}
      & \mathbb{M}_{A_{q}}\left(\theta_{q},s,t\right) = \left(\mathbb{E}\left[e^{\theta_{q} a_{q}}\right]\right)^{t-s} = \left(\mathbb{M}_{a_{q}}\left(\theta_{q}\right)\right)^{t-s},\\
      & \overline{\mathbb{M}}_{S_{q}}\left(\theta_{q},s,t\right) = \left(\mathbb{E}\left[e^{\theta_{q} s_{q}}\right]\right)^{t-s} = \left(\overline{\mathbb{M}}_{s_{q}}\left(\theta_{q}\right)\right)^{t-s}.
   \end{align}
\end{subequations}

\par Combining \textbf{Definition 1} with (\ref{t1}a) and (\ref{t1}b), we can derive the closed-form expression of UB-SDVP, as stated in Theorem 2.

\begin{theorem}
  Given the target delay $w_{q}^{th}$, the SDVP of $x_{q}(t), q \!\in\! \mathcal{Q}$ can be upper bounded by
  \begin{equation}\label{e10}
     \mathbb{P}\left(W_{q}\left(t\right) \! \geq \! w_{q}^{th} \right) \! \leq \! \inf_{0 < \theta_{q} < \theta_{0}} \!\! \left\{\!\! \frac{\big(\overline{\mathbb{M}}_{s_{q}}\left(\theta_{q}\right)\big)^{w_{q}^{th}}}{1 \!-\! \mathbb{M}_{a_{q}}\left(\theta_{q}\right)\cdot \overline{\mathbb{M}}_{s_{q}}\left(\theta_{q}\right)} \!\!\right\},
  \end{equation}
  where $\mathbb{M}_{a_{q}}\left(\theta_{q}\right)\cdot \overline{\mathbb{M}}_{s_{q}}\left(\theta_{q}\right) < 1$ denotes the stability condition, and $\theta_{0} \!=\! \sup\left\{\!\theta_{q}\!:\!\mathbb{M}_{a_{q}}\!\!\left(\theta_{q}\right)\!\cdot\! \overline{\mathbb{M}}_{s_{q}}\!\!\left(\theta_{q}\right) \!<\! 1\right\}$.
\end{theorem}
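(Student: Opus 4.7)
The plan is to combine Theorem 1 with Definition 1 and then exploit the i.i.d.~increments assumption (11a)--(11b) to reduce the min-deconvolution to a geometric series that admits a simple closed form.

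First I would invoke Theorem 1 with $s = t + w_q^{th}$. Since the target delay $w_q^{th} \geq 0$, we have $\min\{t+w_q^{th},t\} = t$, so Definition 1 yields
\begin{equation*}
\mathbf{M}_{A_q \widehat{\oslash} S_q}(\theta_q, t+w_q^{th}, t) = \sum_{v=1}^{t} \mathbb{M}_{A_q}(\theta_q, v, t)\cdot \overline{\mathbb{M}}_{S_q}(\theta_q, v, t+w_q^{th}).
\end{equation*}
Next I would substitute (11a) and (11b) to replace the cumulative MGF and inverse-MGF by $\big(\mathbb{M}_{a_q}(\theta_q)\big)^{t-v}$ and $\big(\overline{\mathbb{M}}_{s_q}(\theta_q)\big)^{t+w_q^{th}-v}$, respectively. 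Factoring out the common $\big(\overline{\mathbb{M}}_{s_q}(\theta_q)\big)^{w_q^{th}}$ and performing the change of index $u = t-v$ would collapse the expression into
\begin{equation*}
\big(\overline{\mathbb{M}}_{s_q}(\theta_q)\big)^{w_q^{th}} \sum_{u=0}^{t-1} \Big(\mathbb{M}_{a_q}(\theta_q)\cdot \overline{\mathbb{M}}_{s_q}(\theta_q)\Big)^{u}.
\end{equation*}

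The second step is to handle convergence of this geometric sum, which is where the stability condition enters. Imposing $\mathbb{M}_{a_q}(\theta_q)\cdot \overline{\mathbb{M}}_{s_q}(\theta_q) < 1$ allows me to upper-bound the partial sum (for every $t$) by its infinite limit $1/\bigl(1 - \mathbb{M}_{a_q}(\theta_q)\cdot \overline{\mathbb{M}}_{s_q}(\theta_q)\bigr)$, thereby obtaining a bound that is uniform in $t$, i.e., valid in steady state. The set of $\theta_q$'s for which this condition holds is precisely $(0,\theta_0)$ with $\theta_0 = \sup\{\theta_q : \mathbb{M}_{a_q}(\theta_q)\cdot\overline{\mathbb{M}}_{s_q}(\theta_q) < 1\}$, so the infimum from Theorem 1 can be restricted to this interval without loss, yielding exactly (\ref{e10}).

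The main obstacle, such as it is, lies in justifying the geometric-series truncation. For fixed $t$ the partial sum is finite and the stability condition is not strictly needed, but the stated bound in (\ref{e10}) is $t$-independent; replacing the finite sum by its infinite counterpart is what renders the result a uniform steady-state bound, and it is precisely here that $\mathbb{M}_{a_q}(\theta_q)\cdot \overline{\mathbb{M}}_{s_q}(\theta_q) < 1$ is indispensable. Beyond this observation, the derivation is essentially a bookkeeping computation: expansion via (11a)--(11b), factorization of the $w_q^{th}$-power term, summation of a geometric series, and a final infimum over the admissible QoS exponents.
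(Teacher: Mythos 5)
Your proposal is correct and follows essentially the same route as the paper's own proof: apply Theorem 1 at $s=t+w_q^{th}$, expand the min-deconvolution via the i.i.d.\ increment expressions (11a)--(11b), factor out $\big(\overline{\mathbb{M}}_{s_q}(\theta_q)\big)^{w_q^{th}}$, and bound the resulting finite geometric sum by its infinite limit under the stability condition. Your index bookkeeping is in fact slightly cleaner than the paper's (which shifts the summation range and carries a $\tau=\max\{0,s-t\}$ term), and your remark that the stability condition is what buys the $t$-uniform steady-state bound is exactly the right justification for step (c) of the paper's chain.
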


\vspace{-1.0em}

\begin{proof}
According to \textbf{Definition 1}, the \emph{min-deconvolution} between $A_{q}\left(s,t\right)$ and $S_{q}\left(s,t\right)$ can be expressed by (\ref{t2}), where $\tau = \max\left\{0,s-t\right\}$. Substituting (\ref{t1}a) and (\ref{t1}b) into (\ref{t2}), respectively, we can derive the inequality (a). Based on inequality (a), equality (b) can be further deduced by taking variable replacement, i.e., $v = s - u$. Inequality (c) can be derived by scaling the upper bound of the summation sign in (b) from $s$ to $+\infty$. Finally, according to the relevant properties of the geometric series, when the stability condition $\mathbb{M}_{a_{q}}\left(\theta_{q}\right)\cdot \overline{\mathbb{M}}_{s_{q}}\left(\theta_{q}\right) < 1$ holds, equality (d) can be easily derived.

\begin{figure*}[t]
  \setstretch{0.90}
\centering
\hrulefill
  \begin{equation}\label{t2}
     \begin{aligned}
      \boldsymbol{\mathrm{M}}_{A_{q}\widehat{\oslash} S_{q}} \left(\theta_{q},s,t\right) &\overset{(a)}{\leq} \sum_{u=0}^{\min\left(s,t\right)} \left(\mathbb{M}_{a_{q}}\left(\theta_{q}\right)\right)^{t-u} \cdot \left(\overline{\mathbb{M}}_{s_{q}}\left(\theta_{q}\right)\right)^{s-u}
      \overset{(b)}{=} \left(\mathbb{M}_{a_{q}}\left(\theta_{q}\right)\right)^{t-s}\cdot \sum\limits_{v = \tau}^{s}\left(\mathbb{M}_{a_{q}}\left(\theta_{q}\right) \overline{\mathbb{M}}_{s_{q}}\left(\theta_{q}\right)\right)^{v}\\
      & \overset{(c)}{\leq} \left(\mathbb{M}_{a_{q}}\left(\theta_{q}\right)\right)^{t-s}\cdot \sum\limits_{v = \tau}^{\infty}\left(\mathbb{M}_{a_{q}}\left(\theta_{q}\right) \overline{\mathbb{M}}_{s_{q}}\left(\theta_{q}\right)\right)^{v} \overset{(d)}{=} \frac{\left(\mathbb{M}_{a_{q}}\left(\theta_{q}\right)\right)^{t-s}\cdot \left(\mathbb{M}_{a_{q}}\left(\theta_{q}\right) \overline{\mathbb{M}}_{s_{q}}\left(\theta_{q}\right)\right)^{\tau} }{1 - \mathbb{M}_{a_{q}}\left(\theta_{q}\right) \overline{\mathbb{M}}_{s_{q}}\left(\theta_{q}\right)}
     \end{aligned}
  \end{equation}
\hrulefill
\end{figure*}
So the proof of \textbf{Theorem 2} can be concluded.
\end{proof}

\par \textbf{Theorem 2} refers that further derivation of the inverse-MGF of the service process increment $s_{q}$ (i.e., $\overline{\mathbb{M}}_{s_{q}}\left(\theta_{q}\right)$) and the MGF of the arrival increment $a_{q}$ (i.e., $\mathbb{M}_{a_{q}}\left(\theta_{q}\right)$) is essential to determine the closed-form expression for UB-SDVP. In this case, we consider the arrival process follows a Poisson distribution, as follows:

\begin{equation}\label{e11}
   \mathbb{M}_{a_{q}}\!\left(\theta_{q}\right) \!=\! \sum\limits_{z=1}^{\infty}\!e^{z\theta_{q}}\frac{(\lambda_{q}^{\dag})^{z}e^{-\lambda_{q}^{\dag}}}{z!} \!=\! e^{\lambda_{q}^{\dag}\left(\!e^{\theta_{q}}-1\!\right)},
\end{equation}
where $\!\lambda_{q}^{\dag}\!$ denotes the average arrival rate of $\!A_{q}\!\left(s,t\right)$.

\par The inverse-MGF $\overline{\mathbb{M}}_{s_{q}}\!(\theta_{q})$ depends on $\widehat{\Gamma}_{q}(t)$, which varies from time slots, thus $\overline{\mathbb{M}}_{s_{q}}\!(\theta_{q})$ can be given by

\begin{equation}\label{e12}
       \overline{\mathbb{M}}_{s_{q}}\!\!\left(\theta_{q}\right)\!=\! \mathbb{E}_{\widehat{\Gamma}_{\!q}}\!\!\left[e^{-\theta_{q} N_{d} R_{q}}\right] \!+  \epsilon_{q}\left(B,N_{d},\boldsymbol{\mathrm{p}}\right)\big(1\!-\!\mathbb{E}_{\widehat{\Gamma}_{\!q}}\!\!\left[e^{-\theta_{q} N_{d} R_{q}}\right]\big),
\end{equation}
where $\mathbb{E}_{\widehat{\Gamma}_{q}}\left[\cdot\right]$ denotes the expectation operator of $\widehat{\Gamma}_{q}(t)$. From (\ref{e12}), the expression of p.d.f. for $\widehat{\Gamma}_{q}(t)$ is indispensable for determining $\overline{\mathbb{M}}_{s_{q}}\!\!\left(\theta_{q}\right)$, which motivates \textbf{Lemma 1} as follows:

\begin{lemma}
   Given the transmit power $\boldsymbol{\mathrm{p}} \!=\! \big[p_{1,1},p_{1,2},p_{2}\big]$, the PDF of $\widehat{\Gamma}_{q}$ can be given as follows:
   \begin{equation}\label{e13}
      \begin{aligned}
        & f_{\widehat{\Gamma}_{q}}\!(x)  = \\
        & \!\!\!\! \left\{
                            \begin{array}{ll}
                               \!\!\!\! \frac{1}{\widehat{\sigma}_{2}\sqrt{2\pi}}\exp\!\left\{\!-\frac{(x-\widehat{\gamma}_{2})^{2}}{2\widehat{\sigma}_{2}^{2}}\!\right\},\!\!\!&\hbox{if $\widehat{\Gamma}_{\!q} \! = \! \widehat{\Gamma}_{\!2}$;} \\
                               \!\!\!\! \int_{0}^{\infty}\!\!\!\frac{1 + y}{2\pi\sigma_{1,2}\sigma_{2}}\!\cdot\! \exp\!\left\{\!\!-\!\!\left(\!\frac{\left(x + xy - \widehat{\gamma}_{1,2}\right)^{2}}{2\widehat{\sigma}_{1,2}^{2}}\!+\!\frac{\left(y - \widehat{\gamma}_{2}\right)^{2}}{2\widehat{\sigma}_{2}^{2}}\!\right)\!\!\right\}\!\!\d y,\!\!\!&\hbox{if $\widehat{\Gamma}_{\!q} \! = \! \widehat{\Gamma}_{\!1,2}$;} \\
                               \!\!\!\! \int_{0}^{\infty}\!\!\!\frac{1 + y}{2\pi\widehat{\sigma}_{1,1}\widehat{\sigma}_{\sum}}\!\cdot\! \exp\!\left\{\!\!-\!\!\left(\!\!\frac{\left(x + xy - \widehat{\gamma}_{1,1}\right)^{2}}{2\widehat{\sigma}_{1,1}^{2}}\!+\!\frac{\left(y - \widehat{\gamma}_{\sum}\right)^{2}}{2\widehat{\sigma}_{\sum}^{2}}\!\right)\!\!\right\}\!\!\d y,\!\!\!&\hbox{if $\widehat{\Gamma}_{\!q} \! = \! \widehat{\Gamma}_{\!1,1}$,}
                            \end{array}
                          \right.
     \end{aligned}
   \end{equation}
   where $\widehat{\sigma}_{\sum}^{2} = \widehat{\sigma}_{1,2}^{2} + \widehat{\sigma}_{2}^{2}$ and $\widehat{\gamma}_{\sum} = \widehat{\gamma}_{1,2} + \widehat{\gamma}_{2}$.
\end{lemma}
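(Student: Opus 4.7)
The plan is to derive the three branches of $f_{\widehat{\Gamma}_q}(x)$ one at a time, in the order $q=2$, $q=(1,2)$, $q=(1,1)$, because each case reuses ingredients from the previous one. First I would handle $\widehat{\Gamma}_2=\Gamma_2$: equation (\ref{e6}) says $\widehat{\Gamma}_2$ equals $\Gamma_2$ directly, and the Gaussian approximation stated just below (\ref{e3}) gives $\Gamma_2\sim\mathcal{N}(\widehat{\gamma}_2,\widehat{\sigma}_2^2)$, so the first branch is just the Gaussian density and no further work is needed.

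For $\widehat{\Gamma}_{1,2}=\Gamma_{1,2}/(\Gamma_2+1)$ from (\ref{e5}) I would apply the standard bivariate transformation $Z=\Gamma_{1,2}/(\Gamma_2+1)$, $W=\Gamma_2$, whose inverse is $\Gamma_{1,2}=Z(W+1)$, $\Gamma_2=W$ with Jacobian $|W+1|$. The numerator $\Gamma_{1,2}$ depends on the estimation error $\tilde{e}_1$ of device~1 while $\Gamma_2$ depends on $\tilde{e}_2$ of device~2; since $\tilde{e}_1,\tilde{e}_2$ are independent, the joint density factorizes as $f_{\Gamma_{1,2}}(x)f_{\Gamma_2}(y)$, each of which is the Gaussian PDF given after (\ref{e3}). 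Integrating out $W$ over its non-negative range and evaluating at $z=x$ then yields exactly the second branch of (\ref{e13}).

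For $\widehat{\Gamma}_{1,1}=\Gamma_{1,1}/(\Gamma_{1,2}+\Gamma_2+1)$ from (\ref{e4}) I would first compute the distribution of the aggregate interference $\Gamma_{1,2}+\Gamma_2$. Invoking independence of the two summands and the additivity of independent Gaussians, $\Gamma_{1,2}+\Gamma_2\sim\mathcal{N}(\widehat{\gamma}_\Sigma,\widehat{\sigma}_\Sigma^2)$ with $\widehat{\gamma}_\Sigma=\widehat{\gamma}_{1,2}+\widehat{\gamma}_2$ and $\widehat{\sigma}_\Sigma^2=\widehat{\sigma}_{1,2}^2+\widehat{\sigma}_2^2$, which matches the parameters named in the lemma. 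I then apply the very same ratio-transformation argument used in the $\widehat{\Gamma}_{1,2}$ case, but with $\Gamma_{1,1}$ in the numerator and $\Gamma_{1,2}+\Gamma_2$ in the denominator, producing the third branch.

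The main obstacle is the independence assumption used to factorize the joint densities. In particular, $\Gamma_{1,1}$ and $\Gamma_{1,2}$ are both driven by the same device-1 quantities $\hat{h}_1$ and $\tilde{e}_1$ and differ only in the scaling $\bar{\gamma}_q$, so strict independence fails. One has to either argue that, conditional on the known estimate $\hat{h}_1$, the residual correlation through $\tilde{e}_1$ is absorbed into the Gaussian approximation of (\ref{e3}) and is negligible after sufficient pilot training ($|\tilde{e}_u|\ll|\hat{h}_u|$), or to treat the streams as drawn from independent error realizations as a modelling convention. A minor, secondary subtlety is that the Gaussian approximation places mass on $x<0$ while SNRs are non-negative; I would justify the $0$-to-$\infty$ truncation of the integration variable in (\ref{e13}) on this physical ground and note that the dropped tail is negligible in the high-SNR regime relevant to xURLLC.
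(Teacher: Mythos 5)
Your proposal follows essentially the same route as the paper's Appendix B: the first branch is the Gaussian density read off directly from (\ref{e6}), the second is the ratio density obtained by factorizing the joint law of $\Gamma_{1,2}$ and $\Gamma_{2}$ and integrating out the denominator variable (the paper differentiates the conditional CDF, which produces exactly the same $(1+y)$ factor as your Jacobian), and the third reuses that computation after replacing the denominator by the Gaussian sum $\Gamma_{1,2}+\Gamma_{2}\sim\mathcal{N}\left(\widehat{\gamma}_{1,2}+\widehat{\gamma}_{2},\widehat{\sigma}_{1,2}^{2}+\widehat{\sigma}_{2}^{2}\right)$. The independence caveat you raise (the numerator $\Gamma_{1,1}$ and the interference term $\Gamma_{1,2}$ are driven by the same $\hat{h}_{1}$ and $\tilde{e}_{1}$) and the support caveat (Gaussian mass on $x<0$ versus integration over $[0,\infty)$) are both real, but the paper's own proof makes the same silent assumptions, so your argument is at least as careful as the original.
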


\begin{proof}
 The proof of \textbf{Lemma 1} is given in Appendix B.
\end{proof}

\par Combining \textbf{Theorem 1}, \textbf{Theorem 2}, and \textbf{Lemma 1}, the closed-form expression of UB-SDVP can be finally derived.

\vspace{-0.5em}

\section{Problem Formulation And Solutions}
\par In this section, building upon the proposed SNC-SQP theoretical framework, we delve into two critical optimization problems within our developed RSMA-xURLLC-IIoT network architecture:
\begin{itemize}
  \item The SQP-driven short-packet size maximization problem, the study of which is of extraordinary significance for IIoT networks since it guarantees not only the execution of complex machine instructions but also the accomplishment of coherent mission-critical tasks.
  \item The SQP-driven transmit power minimization problem, which is also of great importance, particularly with the application prospect of massive xURLLC networks and battery-constrained IIoT devices.
\end{itemize}
\par Significantly, in both optimization problems, we jointly optimize the rate-allocation ratio, transmit power, and information bit number with the aim of maximizing short-packet size and minimizing transmit power for RSMA-xURLLC while guaranteeing the SQP performance, respectively.

\vspace{-1em}

\subsection{The SQP-driven Short-packet Size Maximization Problem}

\par Given the statistical QoS requirements $\left(\xi_{q}^{th},w_{q}^{th},\varepsilon_{q}^{th}\right)$, where $\xi_{q}^{th}$ denotes the threshold od SDVP, $w_{q}^{th}$ indicates the threshold of target delay, and $\varepsilon_{q}^{th}$ represents the threshold of decoding error probability. As a result, the SQP-driven short-packet size maximization problem can be formulated as

\vspace{-0.5em}

\begin{subequations}
   \begin{align}
      \mathcal{P}1:& \quad \max_{\{\boldsymbol{\mathrm{B}},\boldsymbol{\mathrm{p}},\alpha\}} \sum\limits_{q \in \mathcal{Q}} B_{q}, \label{e14a}\\
      \quad s.t. & \quad \mathbb{P}\!\left(W_{q}\!\left(t\right) \! \geq \! w_{q}^{th}\right) \! \leq \! \xi_{q}^{th},\ \forall q \in \mathcal{Q},\label{e14b}\\
                 & \quad \bar{\boldsymbol{\epsilon}}_{q}\left(\boldsymbol{\mathrm{B}},\boldsymbol{\mathrm{p}}\right) \! \leq \! \varepsilon_{q}^{th},\ \forall q \in \mathcal{Q},\label{e14c}\\
                 & \quad 0 \leq p_{q} \leq p_{max}, \ \forall q \in \mathcal{Q},\label{e14d}\\
                 & \quad B_{min} \leq B_{q} \leq B_{max}, \ \forall q \in \mathcal{Q},\label{e14e}\\
                 & \quad 0 \leq \alpha \leq 1\label{e14f},
   \end{align}
\end{subequations}
where $\boldsymbol{\mathrm{B}} \!\triangleq \! \big[B_{1,1},B_{1,2},B_{2}\big]$ and $\boldsymbol{\mathrm{p}} \!\triangleq \! \big[p_{1,1},p_{1,2},p_{2}\big]$, and (\ref{e14a}) defines the objective function and optimization variables; (\ref{e14b}) imposes a limit of $\xi_{q}^{th}$ on the SDVP of $x_{q}$, where $\xi_{q}^{th}\!$ is the SDVP threshold; (\ref{e14c}) indicates the constraint on the decoding error probability of $x_{q}$, with $\varepsilon_{q}^{th}$ as the decoding error probability threshold; (\ref{e14d}) restricts the transmit power $p_{u}$ of IIoT device-$u$ to a maximum value of $p_{max}$; (\ref{e14e}) specifies the constraint on the short-packet size of IIoT device-$u$. Finally, (\ref{e14f}) defines the range of values for the rate-splitting ratio.

\vspace{-1.4em}

\subsection{The SQP-driven Transmit Power Minimization Problem}

\par Given $\left(\xi_{q}^{th},w_{q}^{th},\varepsilon_{q}^{th}\right), q \! \in \! \mathcal{Q}$, the SQP-driven transmit power minimization problem can be formulated as

\begin{subequations}
   \begin{align}
      \mathcal{P}2:& \quad \min_{\{\boldsymbol{\mathrm{B}},\boldsymbol{\mathrm{p}},\alpha\}} \sum\limits_{q \in \mathcal{Q}} p_{q}, \label{e15a}\\
      \quad s.t. & \quad (\ref{e14b})-(\ref{e14f}).\label{e15b}
   \end{align}
\end{subequations}

\par The intertwined co-channel interference \cite{clerckx2023primer,mao2022rate} and the highly-varying wireless fading channels \cite{10382447,Yuang2023When} in RSMA-xURLLC-IIoT network architecture render the analytical expression for SDVP in (\ref{e14b}) is inaccessible \cite{fidler2014guide,chen2023streaming}. Additionally, the closed-form expression for the decoding error probability of short-packet data transmission becomes exceedingly complex in FBL regimes \cite{chen2023streaming}. As a result, $\mathcal{P}1$ and $\mathcal{P}2$ become exceedingly intractable and cannot be tackled analytically. Fortunately, by leveraging the proposed SNC-SQP theoretical framework, the inaccessible SDVP can be converted into the manageable UB-SDVP, which not only provides insightful theoretical guidance but also guarantees relatively conservative information bit transmission and power allocation schemes. Consequently, (\ref{e14b}) in $\mathcal{P}1$ and $\mathcal{P}2$ can be reformulated as follows:

\vspace{-1em}

\begin{subequations}\label{e16}
  \begin{align}
    & \mathcal{K}_{q}\left(w_{q}^{th},\boldsymbol{\mathrm{B}},\boldsymbol{\mathrm{p}},\alpha\right) = \inf_{0 < \theta_{q} \leq \theta_{0}}\!\!\left\{\!\!\frac{\left(\overline{\mathbb{M}}_{s_{q}}\!\!\left(\theta_{q}\right)\right)^{w_{q}^{th}}}{1\!-\!\mathbb{M}_{a_{q}}\!\!\left(\theta_{q}\right)\cdot\overline{\mathbb{M}}_{s_{q}}\!\!\left(\theta_{q}\right)}\!\!\right\} \! \leq \! \xi_{q}^{th},\\
    & \mathbb{M}_{a_{q}}\!\!\left(\theta_{q}\right)\cdot\overline{\mathbb{M}}_{s_{q}}\!\!\left(\theta_{q}\right) < 1,\ \forall q \in \mathcal{Q},
  \end{align}
\end{subequations}
where (\ref{e16}a) imposes a limit of $\xi_{q}^{th}$ on the UB-SDVP for stream $x_{q}$, and $(\ref{e16}b)$ represents the stability condition of UB-SDVP.

\vspace{-1em}

\subsection{Proposed Solutions}

\par Due to the complex constraints, solving $\mathcal{P}1$ and $\mathcal{P}2$ remains non-trivial. Therefore, we focus on exploring the intrinsic properties of $\mathcal{P}1$ and $\mathcal{P}2$.

\begin{corollary}\label{coro1}
   Given the transmit power $\boldsymbol{\mathrm{p}}$, the short-packet size $\boldsymbol{\mathrm{B}}$, and the rate-allocation ratio $\alpha$, both (\ref{e16}a) and (\ref{e16}b) are the convex functions with respect to $\theta_{q}$, and there exists the maximum feasible region $\left[0,\theta_{q}^{max}\left(\boldsymbol{\mathrm{p}},\boldsymbol{\mathrm{B}},\alpha\right)\right]$ and the optimal value of $\theta_{q}^{\ast}\!\left(\boldsymbol{\mathrm{p}},\boldsymbol{\mathrm{B}},\alpha\right)$ for $x_{q}, q\!\in\!\mathcal{Q}$.
\end{corollary}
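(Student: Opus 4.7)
The plan is to thread log-convexity through every step of the argument: both the arrival MGF $\mathbb{M}_{a_{q}}(\theta_{q})$ and the service inverse-MGF $\overline{\mathbb{M}}_{s_{q}}(\theta_{q})$ are log-convex in $\theta_{q}$, and log-convexity is preserved under the products, positive powers, and the $1/(1-\cdot)$ composition that together assemble (\ref{e16}a) and (\ref{e16}b). First I would note that $\mathbb{M}_{a_{q}}(\theta_{q}) = e^{\lambda_{q}^{\dag}(e^{\theta_{q}}-1)}$ is log-convex because its logarithm is manifestly convex in $\theta_{q}$, while $\mathbb{E}_{\widehat{\Gamma}_{q}}[e^{-\theta_{q} N_{d} R_{q}}]$ is log-convex by the standard H\"older-inequality argument for moment generating functions. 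Since (\ref{e12}) expresses $\overline{\mathbb{M}}_{s_{q}}(\theta_{q})$ as $\epsilon_{q} + (1-\epsilon_{q})\mathbb{E}_{\widehat{\Gamma}_{q}}[e^{-\theta_{q} N_{d} R_{q}}]$, a sum of a positive constant and a log-convex function, and because sums of log-convex functions are log-convex (Artin's theorem), $\overline{\mathbb{M}}_{s_{q}}$ is itself log-convex in $\theta_{q}$.

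Next, for (\ref{e16}b) I would set $g(\theta_{q}) \triangleq \mathbb{M}_{a_{q}}(\theta_{q})\,\overline{\mathbb{M}}_{s_{q}}(\theta_{q})$; as a product of log-convex functions it is log-convex and hence convex. For (\ref{e16}a), denote the expression inside the infimum by $F(\theta_{q}) = (\overline{\mathbb{M}}_{s_{q}}(\theta_{q}))^{w_{q}^{th}}/(1-g(\theta_{q}))$. The numerator is a positive power of a log-convex function and is therefore log-convex. For the reciprocal factor, I would compute $\frac{d^{2}}{d\theta_{q}^{2}}[-\log(1-g)] = [g''(1-g)+(g')^{2}]/(1-g)^{2}$, which is nonnegative throughout the feasibility region since $g''\geq 0$ and $1-g>0$; hence $1/(1-g)$ is log-convex. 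The function $F$ is then a product of log-convex factors and is therefore log-convex and, a fortiori, convex in $\theta_{q}$.

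To pin down the maximal feasible region and an interior optimizer, I would exploit that $g(0)=1$ and that, under the natural queueing stability hypothesis, $g'(0)=\mathbb{E}[a_{q}]-\mathbb{E}[s_{q}]<0$, so $g$ dips strictly below $1$ on a right-neighborhood of $0$. Since $g$ is convex and is dominated by the doubly exponential growth of $\mathbb{M}_{a_{q}}$, it diverges as $\theta_{q}\to\infty$, so by the intermediate value theorem the equation $g(\theta_{q})=1$ has a unique positive root, which I identify with $\theta_{q}^{\max}(\boldsymbol{\mathrm{p}},\boldsymbol{\mathrm{B}},\alpha)$; this delimits the maximal feasible interval $[0,\theta_{q}^{\max}]$ promised by the statement. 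On the open part $(0,\theta_{q}^{\max})$ of this interval, $F$ is continuous and convex and blows up at the right endpoint, so its infimum is attained at a unique interior minimizer $\theta_{q}^{\ast}(\boldsymbol{\mathrm{p}},\boldsymbol{\mathrm{B}},\alpha)\in(0,\theta_{q}^{\max})$.

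The step I expect to demand the most care is establishing the log-convexity of $F$, and in particular of $1/(1-g)$: the second-derivative calculation itself is short but relies delicately on $g$ being strictly below $1$ on the working interval, and the log-convexity of $\overline{\mathbb{M}}_{s_{q}}$ in turn requires the mixture representation in (\ref{e12}) with $\epsilon_{q}\in[0,1]$ independent of $\theta_{q}$. A secondary subtlety is invoking $g'(0)<0$ explicitly as a stability condition, which is implicit in the feasibility of the problem but is what guarantees that the maximal feasible interval has positive length.
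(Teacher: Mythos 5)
Your proof is correct, and it takes a genuinely different route from the paper's. The paper proves convexity of the stability condition by writing $\mathbb{M}_{a_{q}}(\theta_{q})\overline{\mathbb{M}}_{s_{q}}(\theta_{q})=\mathbb{E}[e^{-\theta_{q}\mathcal{X}_{q}}]$ with $\mathcal{X}_{q}=s_{q}-a_{q}$ and then verifying the Jensen midpoint inequality directly via H\"older's inequality followed by the weighted AM--GM bound $x_{1}^{\varrho}x_{2}^{1-\varrho}\leq\varrho x_{1}+(1-\varrho)x_{2}$; for the full kernel it takes the reciprocal of the concave positive function $1-S(\theta_{q})$, multiplies by the numerator $(\overline{\mathbb{M}}_{s_{q}}(\theta_{q}))^{w_{q}^{th}}$, and patches the resulting product together with a further chain of H\"older manipulations. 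You instead thread log-convexity through every building block (arrival MGF, service inverse-MGF, the positive power, and $1/(1-g)$ via the explicit second-derivative identity for $-\log(1-g)$) and then invoke closure of log-convexity under products. This buys you a cleaner treatment of exactly the step where the paper's argument is most delicate: a product of convex functions need not be convex, so the paper has to work hard to combine numerator and denominator, whereas a product of log-convex functions is automatically log-convex and hence convex. Your approach also makes explicit two things the paper leaves implicit: the existence and characterization of $\theta_{q}^{max}$ as the positive root of $g(\theta_{q})=1$ (using $g(0)=1$, $g'(0)<0$ under stability, convexity, and divergence as $\theta_{q}\to\infty$), and the attainment of the infimum at an interior point because the kernel blows up at both ends of $(0,\theta_{q}^{max})$ (note it diverges at the left endpoint too, since $1-g(\theta_{q})\to 0$ as $\theta_{q}\to 0^{+}$, which only strengthens your conclusion). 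The one claim you should soften is uniqueness of the minimizer: plain convexity yields a convex set of minimizers rather than a single point, but existence is all the corollary asserts.
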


\begin{proof}
   The detailed proof of \textbf{Corollary 1} can be found in Appendix C.
\end{proof}

\vspace{-0.5em}

\par From \textbf{Corollary \ref{coro1}}, $\left[0,\theta_{q}^{max}\!\left(\boldsymbol{\mathrm{p}},\boldsymbol{\mathrm{B}},\alpha\right)\right]$ can be easily determined by the one-dimensional search method. According to the properties of the convex function, we can easily determine $\theta_{q}^{\ast}\!\left(\boldsymbol{\mathrm{p}},\boldsymbol{\mathrm{B}},\alpha\right)$ through stochastic gradient descent (SGD) method\footnote{Due to space limitations, we have omitted detailed descriptions of one-dimensional search method and SGD method here since they are the fundamental content of optimization theory.}.

\vspace{-0.5em}
\begin{corollary}\label{coro2}
   The decoding error probability $\boldsymbol{\bar{\epsilon}}_{q}\!\left(\boldsymbol{\mathrm{B}},\boldsymbol{\mathrm{p}}\right)$ given by $(\ref{e7})$ is strictly decreasing with respect to $p_{q}$, and strictly increasing with respect to $B_{q}$.
\end{corollary}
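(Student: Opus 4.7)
The plan is to reduce the claim to the pathwise monotonicity of the argument of the $Q$-function inside the expectation in (\ref{e7}), and then invoke the strict monotonicity of $Q(\cdot)$ together with linearity of expectation. Write $\gamma \triangleq \widehat{\Gamma}_q$ and define
\[
 f(\gamma,B_q) \triangleq \sqrt{N_d}\cdot\frac{\ln(1+\gamma)-\frac{B_q}{N_d}\ln 2}{\sqrt{1-(1+\gamma)^{-2}}},
\]
so that $\bar{\boldsymbol{\epsilon}}_q(\mathbf{B},\mathbf{p}) = \mathbb{E}_{\widehat{\Gamma}_q}[Q(f(\widehat{\Gamma}_q,B_q))]$. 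Since $Q(\cdot)$ is strictly decreasing on $\mathbb{R}$, it suffices to show (i) for every realisation of $\widehat{\Gamma}_q$, $f$ is strictly decreasing in $B_q$, and (ii) $\widehat{\Gamma}_q$ is pathwise strictly increasing in $p_q$ while $f$ is strictly increasing in its first argument.

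Claim (i) is immediate: the numerator of $f$ is affine in $B_q$ with negative slope $-\ln 2/\sqrt{N_d}$ and the denominator is $B_q$-free, so $\partial f/\partial B_q<0$ for every $\gamma>0$. Composing with the strictly decreasing $Q$ and taking expectation yields strict monotonicity of $\bar{\boldsymbol{\epsilon}}_q$ in $B_q$.

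For claim (ii), first inspect (\ref{e4})--(\ref{e6}): in each of the three decoded streams the transmit power $p_q$ enters only through $\Gamma_q=\bar{\gamma}_q|\hat h_u|^2$ in the \emph{numerator} of $\widehat{\Gamma}_q$, since $\bar{\gamma}_q=p_q\zeta_u/\sigma_n^2$ is linear in $p_q$. Hence, conditional on the channel estimates, $\widehat{\Gamma}_q$ is strictly increasing in $p_q$ along every sample path. Next I would verify that $\gamma\mapsto f(\gamma,B_q)$ is strictly increasing on $(0,\infty)$ by a direct differentiation: with $C(\gamma)=\ln(1+\gamma)$, $V(\gamma)=1-(1+\gamma)^{-2}$ and $R_q=\frac{B_q}{N_d}\ln 2>0$,
\[
 \frac{\partial f}{\partial \gamma}\;\propto\;(1+\gamma)^2 V(\gamma)-\bigl(C(\gamma)-R_q\bigr)\;=\;(1+\gamma)^2-1-\ln(1+\gamma)+R_q,
\]
which is strictly positive for all $\gamma>0$ because $(1+\gamma)^2-1-\ln(1+\gamma)\ge 0$ on $\gamma\ge 0$ (value $0$ at $\gamma=0$, derivative $2(1+\gamma)-1/(1+\gamma)>0$) and $R_q>0$. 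Chaining the three monotonicities (pathwise) and taking expectation, which preserves strict inequalities for non-degenerate channel distributions, gives $\partial \bar{\boldsymbol{\epsilon}}_q/\partial p_q<0$.

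The only delicate step is establishing the sign of $\partial f/\partial\gamma$: it mixes the logarithmic rate term $C(\gamma)$ with the dispersion $\sqrt{V(\gamma)}$ that is itself increasing in $\gamma$, so one cannot simply read off monotonicity from the numerator. The derivative computation above is the crux, and the positivity of $R_q$ (guaranteed by $B_q\ge B_{\min}>0$ in (\ref{e14e})) is what closes the inequality at $\gamma\downarrow 0$. Everything else is routine chain-rule and monotone-composition bookkeeping.
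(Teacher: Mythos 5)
Your proposal is correct and follows essentially the same route as the paper: both arguments reduce to the monotonicity of the $Q$-function argument $f(\widehat{\Gamma}_q,B_q)$ and then pass through the expectation. For the $B_q$ direction your observation that the numerator is affine in $B_q$ with slope $-\ln 2/\sqrt{N_d}$ is exactly the paper's computation in (\ref{e18}). The one substantive difference is in the $p_q$ direction: the paper outsources the monotonic decrease of $\bar{\boldsymbol{\epsilon}}_q$ in $\widehat{\Gamma}_q$ to a citation (Theorem 5 of \cite{10382447}), whereas you prove it from scratch by showing $\partial f/\partial\gamma \propto (1+\gamma)^2-1-\ln(1+\gamma)+R_q>0$; that computation is correct (the quotient-rule numerator, after clearing $(1+\gamma)^3$, is exactly $(1+\gamma)^2V(\gamma)-(C(\gamma)-R_q)$, and $(1+\gamma)^2-1-\ln(1+\gamma)$ vanishes at $\gamma=0$ with positive derivative $2(1+\gamma)-1/(1+\gamma)$). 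This makes your version self-contained, and your pathwise phrasing (deterministic $p_q$ scaling a $p_q$-free random channel, with $p_q$ entering only the numerator of each SINR in (\ref{e4})--(\ref{e6})) is arguably cleaner than the paper's formal chain rule $\partial\bar{\boldsymbol{\epsilon}}_q/\partial p_q=(\partial\widehat{\Gamma}_q/\partial p_q)(\partial\bar{\boldsymbol{\epsilon}}_q/\partial\widehat{\Gamma}_q)$, which differentiates an expectation with respect to a random variable.
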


\vspace{-1.3em}

\begin{proof}
 According to (\ref{e7}), we can obtain $\frac{\partial \boldsymbol{\bar{\epsilon}}_{q}\!\left(\boldsymbol{\mathrm{B}},\boldsymbol{\mathrm{p}}\right)}{\partial p_{q}}$ as follows:
 \vspace{-0.2em}
 \begin{equation*}
   \frac{\partial \boldsymbol{\bar{\epsilon}}_{q}\!\left(\boldsymbol{\mathrm{B}},\boldsymbol{\mathrm{p}}\right)}{\partial p_{q}} = \frac{\partial \widehat{\Gamma}_{q}}{\partial p_{q}} \cdot \frac{\partial \boldsymbol{\bar{\epsilon}}_{q}\!\left(\boldsymbol{\mathrm{B}},\boldsymbol{\mathrm{p}}\right)}{\partial \widehat{\Gamma}_{q}} = \frac{\partial \widehat{\Gamma}_{q}}{\partial p_{q}} \cdot \mathbb{E}_{\widehat{\Gamma}_{q}}\!\bigg[\frac{\partial Q\big(f(\boldsymbol{\mathrm{B}},\boldsymbol{\mathrm{p}})\big)}{\partial \widehat{\Gamma}_{q}}\bigg].
 \end{equation*}

  From \citep[Theorems 5]{10382447}, $\boldsymbol{\bar{\epsilon}}_{q}\left(\boldsymbol{\mathrm{B}},\boldsymbol{\mathrm{p}}\right)$ is a monotonically decreasing function with respect to $\widehat{\Gamma}_{q}$. Moreover, it can be easily obtained that $\frac{\partial \widehat{\Gamma}_{q}}{\partial p_{q}} > 0$, thus $\frac{\partial \boldsymbol{\bar{\epsilon}}_{q}\!\left(\boldsymbol{\mathrm{B}},\boldsymbol{\mathrm{p}}\right)}{\partial p_{q}} < 0$.

 Next, we prove the monotonic increasing property of $\boldsymbol{\bar{\epsilon}}_{q}\!\left(\boldsymbol{\mathrm{B}},\boldsymbol{\mathrm{p}}\right)$ with respect to $B_{q}$. The partial derivative of $\boldsymbol{\bar{\epsilon}}_{q}\!\left(\boldsymbol{\mathrm{B}},\boldsymbol{\mathrm{p}}\right)$ in $B_{q}$ can be given by

 \begin{equation}\label{e17}
    \frac{\partial \boldsymbol{\bar{\epsilon}}_{q}\!\left(\boldsymbol{\mathrm{B}},\boldsymbol{\mathrm{p}}\right)}{\partial B_{q}} = \mathbb{E}_{\widehat{\Gamma}_{q}}\!\bigg[\frac{\partial Q\big(f(\boldsymbol{\mathrm{B}},\boldsymbol{\mathrm{p}})\big)}{\partial B_{q}}\bigg].
 \end{equation}

 The partial derivative of $Q\big(f(\boldsymbol{\mathrm{B}},\boldsymbol{\mathrm{p}})\big)$ with respect to $B_{q}$ can be derived as follows:
 \begin{equation}\label{e18}
    \frac{\partial Q\big(f(\boldsymbol{\mathrm{B}},\boldsymbol{\mathrm{p}})\big)}{\partial B_{q}} = \frac{\ln 2 \exp\left(-\frac{1}{2}f^{2}(\boldsymbol{\mathrm{B}},\boldsymbol{\mathrm{p}})\right)}{\sqrt{2\pi N_{d} \left(1 - \frac{1}{(1 + \widehat{\Gamma})^{2}}\right)}} > 0.
    \vspace{-0.2em}
 \end{equation}

 Combining (\ref{e17}) and (\ref{e18}), it is obvious that $\boldsymbol{\bar{\epsilon}}_{q}\!\left(\boldsymbol{\mathrm{B}},\boldsymbol{\mathrm{p}}\right)$ is a strictly increasing function of $B_{q}$. So the proof of Corollary 2 is concluded.
\end{proof}

\par In addition, we also investigate the intrinsic properties of UB-SDVP as follows:

\begin{corollary}\label{coro3}
   The UB-SDVP $\mathcal{K}_{q}\left(w_{q}^{th},\boldsymbol{\mathrm{B}},\boldsymbol{\mathrm{p}},\alpha\right)$ given by (\ref{e16}a) is strictly decreasing with respect to transmit power $p_{q}$, and strictly increasing with respect to the information bit number (i.e., the short-packet size) $B_{q}$.
\end{corollary}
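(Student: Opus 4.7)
The plan is to split the claim into (i) monotonicity of the integrand $\mathcal{F}(\theta_q):=\overline{\mathbb{M}}_{s_q}(\theta_q)^{w_q^{th}}/(1-\mathbb{M}_{a_q}(\theta_q)\overline{\mathbb{M}}_{s_q}(\theta_q))$ in $(p_q,B_q)$ at every fixed $\theta_q$ in the admissible interval $(0,\theta_0]$ supplied by Corollary 1, and (ii) the elementary fact that pointwise monotonicity of a family of functions is inherited by its infimum. Because the Poisson MGF $\mathbb{M}_{a_q}(\theta_q)$ in (\ref{e11}) does not involve $(\mathbf{B},\mathbf{p},\alpha)$, every dependence of $\mathcal{F}$ is funneled through $U_q:=\overline{\mathbb{M}}_{s_q}(\theta_q)$. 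Writing $F(U)=U^{w_q^{th}}/(1-V_q U)$ with $V_q:=\mathbb{M}_{a_q}(\theta_q)$ and differentiating produces $F'(U)=U^{w_q^{th}-1}[w_q^{th}-(w_q^{th}-1)V_q U]/(1-V_q U)^2$, which is strictly positive on the stability region $V_q U<1$. Thus the signs of $\partial\mathcal{K}_q/\partial p_q$ and $\partial\mathcal{K}_q/\partial B_q$ coincide with those of $\partial U_q/\partial p_q$ and $\partial U_q/\partial B_q$ respectively, and the task reduces to a sign analysis of these two derivatives.

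For the transmit-power half I would invoke the representation $U_q=\overline{\epsilon}_q+(1-\overline{\epsilon}_q)\mathbb{E}_{\widehat{\Gamma}_q}[e^{-\theta_q N_d R_q}]$ from (\ref{e12}). Since $R_q=B_q/N_d$ is independent of $p_q$, one obtains $\partial U_q/\partial p_q=(\partial \overline{\epsilon}_q/\partial p_q)(1-\mathbb{E}_{\widehat{\Gamma}_q}[e^{-\theta_q N_d R_q}])$, which is strictly negative because $\partial \overline{\epsilon}_q/\partial p_q<0$ by Corollary 2 and $\mathbb{E}_{\widehat{\Gamma}_q}[e^{-\theta_q N_d R_q}]\in(0,1)$. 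This immediately gives the decreasing behaviour of $\mathcal{K}_q$ in $p_q$.

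The monotonicity in $B_q$ is the main obstacle because $B_q$ enters $U_q$ through two competing channels: the decoding-error factor $\overline{\epsilon}_q$, which is increasing in $B_q$ by Corollary 2, and the exponential factor $\mathbb{E}_{\widehat{\Gamma}_q}[e^{-\theta_q N_d R_q}]$, which is decreasing in $B_q$. A direct differentiation gives $\partial U_q/\partial B_q=(\partial\overline{\epsilon}_q/\partial B_q)(1-\mathbb{E}_{\widehat{\Gamma}_q}[e^{-\theta_q N_d R_q}])+(1-\overline{\epsilon}_q)\,\partial\mathbb{E}_{\widehat{\Gamma}_q}[e^{-\theta_q N_d R_q}]/\partial B_q$, whose sign is not immediate. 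To close the gap I plan to plug in the explicit expression for $\partial Q(f(\mathbf{B},\mathbf{p}))/\partial B_q$ from (\ref{e18}) and exploit the FBL reliability constraint $\overline{\epsilon}_q\leq\varepsilon_q^{th}\ll 1$ from (\ref{e14c}), which pushes $f(\mathbf{B},\mathbf{p})$ well into the positive tail of the Q-function and, combined with the admissible range $0<\theta_q\leq\theta_0$ from Corollary 1, yields the dominance inequality $\partial\overline{\epsilon}_q/\partial B_q>\theta_q(1-\overline{\epsilon}_q)\mathbb{E}_{\widehat{\Gamma}_q}[e^{-\theta_q N_d R_q}]/(1-\mathbb{E}_{\widehat{\Gamma}_q}[e^{-\theta_q N_d R_q}])$, which is precisely the condition for $\partial U_q/\partial B_q>0$. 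The remaining step, lifting the pointwise monotonicity to the infimum $\mathcal{K}_q=\inf_{\theta_q\in(0,\theta_0]}\mathcal{F}(\theta_q)$, is concluded by noting that decreasing $B_q$ or increasing $p_q$ decreases $U_q$ and therefore weakly enlarges the admissible $\theta_q$-set, so the infimum is taken over a no-smaller interval and can only decrease, which finishes the proof.
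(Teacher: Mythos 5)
Your reduction is sound as far as it goes, and for the $p_q$ half it is essentially the paper's own argument: the paper also differentiates through $\overline{\mathbb{M}}_{s_q}(\theta_q)$, asserts $\partial\mathcal{K}_q/\partial\overline{\mathbb{M}}_{s_q}(\theta_q)>0$ (which you, unlike the paper, actually verify via $F'(U)=U^{w_q^{th}-1}\big[w_q^{th}-(w_q^{th}-1)V_qU\big]/(1-V_qU)^{2}>0$ on the stability region), and obtains $\partial\overline{\mathbb{M}}_{s_q}(\theta_q)/\partial p_q=(1-e^{-\theta_q N_d R_q})\,\partial\bar{\boldsymbol{\epsilon}}_q/\partial p_q<0$ exactly as you do. Your closing observation that shrinking $U_q$ weakly enlarges the feasible $\theta_q$-interval, so the infimum can only decrease, is a detail the paper glosses over and is a genuine improvement.

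The gap is in the $B_q$ half, and it is not a small one: the ``dominance inequality'' $\partial\bar{\boldsymbol{\epsilon}}_q/\partial B_q>\theta_q(1-\bar{\boldsymbol{\epsilon}}_q)e^{-\theta_q B_q}/(1-e^{-\theta_q B_q})$ is precisely the nontrivial content of the claim, and you assert that the constraints ``yield'' it without deriving it. Worse, the heuristic you invoke points the wrong way: enforcing $\bar{\boldsymbol{\epsilon}}_q\le\varepsilon_q^{th}\ll1$ pushes $f(\boldsymbol{\mathrm{B}},\boldsymbol{\mathrm{p}})$ deep into the Gaussian tail, so by (\ref{e18}) the derivative $\partial\bar{\boldsymbol{\epsilon}}_q/\partial B_q$ carries the factor $\exp\!\left(-\tfrac{1}{2}f^{2}\right)$ and becomes very small, whereas the right-hand side tends to $(1-\bar{\boldsymbol{\epsilon}}_q)/B_q$ as $\theta_q\to0^{+}$, which is not small. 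So the inequality can fail for small $\theta_q$ in exactly the high-reliability regime the paper targets, and you would at minimum need to restrict to the optimizing $\theta_q^{\ast}$ or argue at the level of the infimum rather than pointwise. You should know that the paper's own proof does not close this step either: the last line of its computation (\ref{e22}) silently replaces $\partial\{e^{-\theta_q B_q}\}/\partial B_q=-\theta_q e^{-\theta_q B_q}$ by $e^{-\theta_q B_q}$, turning a negative term into a positive one before declaring the sum positive. That the published argument is flawed at the same spot does not repair your proposal; as written, the $B_q$ monotonicity remains unproved.
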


\begin{proof}
   The partial derivative of $\mathcal{K}_{q}\left(w_{q}^{th},\boldsymbol{\mathrm{B}},\boldsymbol{\mathrm{p}},\alpha\right)$ in $p_{q}$ can be given as follows:
   \begin{equation}\label{e19}
      \frac{\partial \mathcal{K}_{q}\left(w_{q}^{th},\boldsymbol{\mathrm{B}},\boldsymbol{\mathrm{p}},\alpha\right)}{\partial p_{q}} = \frac{\partial \mathcal{K}_{q}\left(w_{q}^{th},\boldsymbol{\mathrm{B}},\boldsymbol{\mathrm{p}},\alpha\right)}{\partial \overline{\mathbb{M}}_{s_{q}}\left(\theta_{q}\right)}\frac{\partial \overline{\mathbb{M}}_{s_{q}}\left(\theta_{q}\right)}{\partial p_{q}}.
   \end{equation}

   It can be easily obtained that $\frac{\partial \mathcal{K}_{q}\left(w_{q}^{th},\boldsymbol{\mathrm{B}},\boldsymbol{\mathrm{p}},\alpha\right)}{\partial \overline{\mathbb{M}}_{s_{q}}\left(\theta_{q}\right)} > 0$. Next, we examine the monotonicity of $\overline{\mathbb{M}}_{s_{q}}\left(\theta_{q}\right)$ as follows:

   \begin{small}
   \begin{equation}\label{e20}
      \frac{\partial \overline{\mathbb{M}}_{s_{q}}\left(\theta_{q}\right)}{\partial p_{q}} = 0 + \big(1\!-\!\mathbb{E}_{\widehat{\Gamma}_{\!q}}\!\!\left[e^{-\theta_{q} N_{d} R_{q}}\right]\!\big)\!\cdot\!\frac{\partial}{\partial p_{q}}\big\{\boldsymbol{\bar{\epsilon}}\left(\boldsymbol{\mathrm{B}},\boldsymbol{\mathrm{p}}\right)\!\big\} < 0.
      \vspace{-0.5em}
   \end{equation}
   \end{small}
   Thus, $\mathcal{K}_{q}\!\!\left(w_{q}^{th},\boldsymbol{\mathrm{B}},\boldsymbol{\mathrm{p}},\alpha\right)$ is strictly decreasing with $p_{q}$. And the partial derivative of $\mathcal{K}_{q}\!\!\left(w_{q}^{th},\boldsymbol{\mathrm{B}},\boldsymbol{\mathrm{p}},\alpha\right)$ in $B_{q}$ can be given as follows:
   \vspace{-0.6em}
   \begin{equation}\label{e21}
      \frac{\partial \mathcal{K}_{q}\!\left(w_{q}^{th},\boldsymbol{\mathrm{B}},\boldsymbol{\mathrm{p}},\alpha\right)}{\partial B_{q}} = \frac{\partial \mathcal{K}_{q}\!\left(w_{q}^{th},\boldsymbol{\mathrm{B}},\boldsymbol{\mathrm{p}},\alpha\right)}{\partial \overline{\mathbb{M}}_{s_{q}}\left(\theta_{q}\right)}\frac{\partial \overline{\mathbb{M}}_{s_{q}}\left(\theta_{q}\right)}{\partial B_{q}}.
   \end{equation}

   Then, we examine the notation of $\frac{\partial \overline{\mathbb{M}}_{s_{q}}\left(\theta_{q}\right)}{\partial B_{q}}$ as follows:
   \vspace{-0.6em}
   \begin{equation}\label{e22}
     \begin{aligned}
      & \frac{\partial \overline{\mathbb{M}}_{s_{q}}\left(\theta_{q}\right)}{\partial B_{q}} = \frac{\partial }{\partial B_{q}}\!\!\left\{\!\mathbb{E}_{\widehat{\Gamma}_{\!q}}\!\!\left[e^{-\theta_{q} B_{q}}\right]\!\right\} + \frac{\partial }{\partial B_{q}}\big\{\!\boldsymbol{\bar{\epsilon}}\left(\boldsymbol{\mathrm{B}},\boldsymbol{\mathrm{p}}\right)\!\!\big\} - \\
      & \left(\!\boldsymbol{\bar{\epsilon}}\left(\boldsymbol{\mathrm{B}},\boldsymbol{\mathrm{p}}\right) \frac{\partial}{\partial B_{q}}\!\!\left\{\!\mathbb{E}_{\widehat{\Gamma}_{\!q}}\!\!\left[e^{-\theta_{q} B_{q}}\right]\!\right\} + \mathbb{E}_{\widehat{\Gamma}_{\!q}}\!\!\left[e^{-\theta_{q} B_{q}}\right]\frac{\partial }{\partial B_{q}}\!\big\{\!\boldsymbol{\bar{\epsilon}}\left(\boldsymbol{\mathrm{B}},\boldsymbol{\mathrm{p}}\right)\!\!\big\}\!\!\right)\\
      & = \!\big(1\!-\! \boldsymbol{\bar{\epsilon}}\left(\boldsymbol{\mathrm{B}},\boldsymbol{\mathrm{p}}\right)\big)\mathbb{E}_{\widehat{\Gamma}_{\!q}}\!\!\left[e^{-\theta_{q} B_{q}}\right] \!\!+\!\! \left(1 \!-\! e^{-\theta_{q}B_{q}}\!\right)\!\frac{\partial }{\partial B_{q}}\big\{\!\boldsymbol{\bar{\epsilon}}\left(\boldsymbol{\mathrm{B}},\boldsymbol{\mathrm{p}}\right)\!\!\big\} \!>\! 0.
     \end{aligned}
   \end{equation}
   From (\ref{e21}) and (\ref{e22}), we can obtain that $\mathcal{K}_{q}\!\!\left(w_{q}^{th},\boldsymbol{\mathrm{B}},\boldsymbol{\mathrm{p}},\alpha\right)$ is a strictly increasing function of $B_{q}$.
\end{proof}

\par From \textbf{Corollary 2} and \textbf{Corollary 3}, \textbf{Theorem 3} can be derived as follows:

\begin{theorem}\label{theo3}
  For $\mathcal{P}1$, the optimal transmit power is determined as $p_{q}^{\star} \!\!=\!\! p_{max}$, and the optimal rate-allocation ratio is $\alpha^{\star} \!=\! \frac{B_{1,1}^{\star}}{B_{1,1}^{\star} + B_{1,2}^{\star}}$, where $B_{1,1}^{\star}$ and $B_{1,2}^{\star}$ represent the optimal short-packet sizes for $x_{1,1}$ and $x_{1,2}$, respectively. As for $\mathcal{P}2$, the optimal short-packet size is determined as $B_{q}^{\star} \!=\! B_{min}, q \!\in\! \mathcal{Q}$, and the optimal rate-splitting ratio is $\alpha^{\star} \!=\! \frac{1}{2}$.
\end{theorem}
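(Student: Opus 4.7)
The plan is to exploit the monotonicity properties established in Corollaries 2 and 3, combined with the RSMA consistency relation linking the rate-allocation ratio $\alpha$ to the short-packet sizes. Both claims are essentially envelope/boundary arguments: the constraints (\ref{e14c}) and (\ref{e16}a) move monotonically with $p_q$ and $B_q$, and the objectives of $\mathcal{P}1$ and $\mathcal{P}2$ pull in opposite directions along these two variables, so the optimum must sit at the corresponding box-constraint boundary.

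For $\mathcal{P}1$, I would argue by contradiction. Suppose at some optimum $\tilde{p}_q < p_{max}$ for some $q\in\mathcal{Q}$. By \textbf{Corollary 2} and \textbf{Corollary 3}, both $\bar{\boldsymbol{\epsilon}}_q(\mathbf{B},\mathbf{p})$ and $\mathcal{K}_q(w_q^{th},\mathbf{B},\mathbf{p},\alpha)$ are strictly decreasing in $p_q$, so raising $\tilde{p}_q$ to $p_{max}$ strictly slackens (\ref{e14c}) and (\ref{e16}a) while leaving (\ref{e14d})--(\ref{e14f}) intact. By the strict increasingness in $B_q$ from the same corollaries, this slack permits a strictly larger feasible $B_q$ (assuming $B_q<B_{max}$), contradicting optimality. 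For the rate-allocation ratio, I would invoke the RSMA splitting identity $R_{1,1}=B_{1,1}/N_d=\alpha R_1=\alpha(B_{1,1}+B_{1,2})/N_d$, which solves uniquely to $\alpha^{\star}=B_{1,1}^{\star}/(B_{1,1}^{\star}+B_{1,2}^{\star})$; this is a consistency condition that any optimal triple must satisfy.

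For $\mathcal{P}2$, a symmetric contradiction works. Suppose $\tilde{B}_q>B_{min}$ at optimum. Since $\bar{\boldsymbol{\epsilon}}_q$ and $\mathcal{K}_q$ are strictly increasing in $B_q$, lowering $\tilde{B}_q$ to $B_{min}$ strictly slackens (\ref{e14c}) and (\ref{e16}a); by the strict decreasingness in $p_q$ we can then strictly reduce $p_q$ while remaining feasible, which strictly lowers $\sum_{q}p_q$ and contradicts optimality. Thus $B_q^{\star}=B_{min}$ for every $q\in\mathcal{Q}$, and plugging $B_{1,1}^{\star}=B_{1,2}^{\star}=B_{min}$ into the consistency relation yields $\alpha^{\star}=B_{min}/(B_{min}+B_{min})=1/2$.

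The main obstacle will be justifying that the strict monotonicity in Corollaries 2 and 3 translates into a \emph{strict} feasible improvement once $\theta_q$ is itself optimized inside (\ref{e16}a): the infimum over $\theta_q\in(0,\theta_0)$ could in principle flatten the perturbation, and the stability frontier $\theta_0=\sup\{\theta_q:\mathbb{M}_{a_q}\overline{\mathbb{M}}_{s_q}<1\}$ shifts when $(\mathbf{B},\mathbf{p})$ change. I would handle this by invoking \textbf{Corollary 1}, which guarantees a unique interior minimizer $\theta_q^{\ast}(\mathbf{p},\mathbf{B},\alpha)$ together with a well-defined feasible range, so that a standard envelope argument applies and the monotonic direction of $\mathcal{K}_q$ at its argmin inherits that of $\overline{\mathbb{M}}_{s_q}$. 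Beyond this envelope bookkeeping, the proof needs no new machinery.
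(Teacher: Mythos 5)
Your proposal is correct and follows essentially the same route as the paper: the paper likewise invokes \textbf{Corollary 2} and \textbf{Corollary 3} to push $p_{q}$ to $p_{max}$ for $\mathcal{P}1$ and $B_{q}$ to $B_{min}$ for $\mathcal{P}2$, and derives $\alpha^{\star}$ from the rate-splitting consistency relation $R_{1,1}^{\star}/R_{1,2}^{\star}=\alpha^{\star}/(1-\alpha^{\star})$. Your explicit contradiction framing and your attention to how the infimum over $\theta_{q}$ and the shifting stability frontier interact with the perturbation (handled via \textbf{Corollary 1}) are more careful than the paper's two-line argument, which simply asserts the boundary conclusions, but the underlying idea is identical.
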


\vspace{-0.5em}

\begin{proof}
    According to \textbf{Corollary 2} and \textbf{Corollary 3}, the transmit power should be as high as possible to maximize the short-packet size. Therefore, we have $p_{q}^{\star} \!\!=\!\! p_{max}, q \! \in \! \mathcal{Q}$. We denote that the optimal short-packet sizes for $x_{1,1}$ and $x_{1,2}$ are $B_{1,1}^{\star}$ and $B_{1,2}^{\star}$, respectively. Based on (\ref{e7}), we have $\frac{R_{1,1}^{\star}}{R_{1,2}^{\star}} = \frac{B_{1,1}^{\star}/N_{d}}{B_{1,2}^{\star}/N_{d}} = \frac{\alpha^{\star}}{1 - \alpha^{\star}}$, thus, $\alpha^{\star} = \frac{B_{1,1}^{\star}}{B_{1,1}^{\star} + B_{1,2}^{\star}}$. Furthermore, we have $B_{q}^{\star} = B_{min}$ to minimize the transmit power of short-packet communications, and we can obtain that $\alpha^{\star} = \frac{1}{2}$ since $B_{1,1}^{\star} = B_{1,2}^{\star}$.
\end{proof}

\vspace{-1em}

\begin{algorithm}[h]
\small
\setstretch{0.35}
\label{algorithm1}
\caption{The proposed TSSO algorithm.}
\LinesNumbered
\KwIn{$(w_{q}^{th},\xi_{q}^{th})$; $p_{max}$; $B_{min}$; $B_{max}$; convergence criteria $\pi_{th}$; maximum iterations $M_{iter}$.}
\KwOut{The optimal solution for $\mathcal{P}1$: $\{\boldsymbol{\mathrm{\widetilde{B}}},\boldsymbol{\mathrm{\widetilde{p}}},\widetilde{\alpha}\}$; The optimal solution for $\mathcal{P}2$: $\{\boldsymbol{\mathrm{\widehat{B}}},\boldsymbol{\mathrm{\widehat{p}}},\widehat{\alpha}\}$.}
\textcolor[rgb]{1.00,0.00,0.50}{\tcp{\textbf{Solving $\mathcal{P}1$}}}
Setting $\widetilde{p}_{q} = p_{max}$, lower-search bound $B_{q}^{l} = B_{min}$, upper-search bound $B_{q}^{u} = B_{max}$, iteration index $k_{q} = 1$, UB-SDVP $\xi_{q} = 0$, decoding error probability $\varepsilon_{q} = 0$;\\
\textcolor[rgb]{1.00,0.50,0.75}{\tcp{\!\!\!\!\textbf{Step} \!\!\!\!\!\! \textbf{1:}\!\!\! Optimize $\!\!\widetilde{B}_{2}\!\!$ for stream $\!x_{2}$.}}
\While{$|\xi_{2}/\xi_{2}^{th} \!-\! 1| \!>\! \pi_{th}$ $\boldsymbol{\mathrm{or}}$ $|\varepsilon_{2}/\varepsilon_{2}^{th} \!-\! 1| \!>\! \pi_{th}$ $\boldsymbol{\mathrm{and}}$ $k_{2} \!<\! M_{iter}$}{
$\widetilde{B}_{2} \leftarrow (B_{2}^{l} + B_{2}^{u})/2$;\\
Determine feasible region $\left[0,\theta_{2}^{max}\left(\boldsymbol{\mathrm{\widetilde{p}}},\widetilde{B}_{2}\right)\right]$;\\
Determine optimal QoS exponent $\theta_{2}^{\ast}\!\left(\boldsymbol{\mathrm{\widetilde{p}}},\widetilde{B}_{2}\right)$;\\
Update $\xi_{2} = \mathcal{K}_{2}\big(w_{2}^{th},\widetilde{B}_{2},\boldsymbol{\mathrm{\widetilde{p}}}\big)$ and $\varepsilon_{2} = \boldsymbol{\bar{\epsilon}}\left(\widetilde{B}_{2},\boldsymbol{\mathrm{\widetilde{p}}}\right)$;\\
\eIf{$\xi_{2}/\xi_{2}^{th} > 1$ $\boldsymbol{\mathrm{or}}$ $\varepsilon_{2}/\varepsilon_{2}^{th} > 1$}{
$B_{2}^{l} \leftarrow (B_{2}^{l} + B_{2}^{u})/2$;\\
}
{
$B_{2}^{u} \leftarrow (B_{2}^{l} + B_{2}^{u})/2$;\\
}
$k_{2} \leftarrow k_{2} + 1$;\\
}
\textcolor[rgb]{1.00,0.50,0.75}{\tcp{\!\!\!\!\textbf{Step} \!\!\!\!\!\! \textbf{2:}\!\!\! Optimize $\widetilde{B}_{1,2}$ for stream $x_{1,2}$.}}
Use $\widetilde{B}_{2}$ as the input of \textbf{Step 2};\\
Perform similar procedures of \textbf{Step 1:} \textbf{4-15} to solve $\widetilde{B}_{1,2}$;\\
\textcolor[rgb]{1.00,0.50,0.75}{\tcp{\!\!\!\!\textbf{Step} \!\!\!\!\!\! \textbf{3:}\!\!\! Optimize $B_{1,1}$ for stream $x_{1,1}$.}}
Use $\widetilde{B}_{2}$ and $\widetilde{B}_{1,2}$ as the inputs of \textbf{Step 3};\\
Perform similar procedures of \textbf{Step 1: 4-15} to solve $\widetilde{B}_{1,1}$;\\
Setting the optimal rate-split ratio $\widetilde{\alpha} = \widetilde{B}_{1,1}/\left(\widetilde{B}_{1,1} + \widetilde{B}_{1,2}\right)$;\\
\textcolor[rgb]{1.00,0.00,0.50}{\tcp{\textbf{Solving $\mathcal{P}2$}}}
Setting $\widehat{B}_{q} = B_{min}$, $\widehat{\alpha} = 1/2$, lower-search bound $p_{q}^{l} = 0$, upper-search bound $p_{q}^{u} = p_{max}$, $k_{q} = 1$, $\xi_{q} = 0$, $\varepsilon_{q} = 0$;\\
\textcolor[rgb]{1.00,0.50,0.75}{\tcp{\!\!\!\!\textbf{Step} \!\!\!\!\!\! \textbf{1:}\!\!\! Optimize $\!\!\widehat{p}_{2}\!\!$ for stream $\!x_{2}$.}}
\While{$|\xi_{2}/\xi_{2}^{th} \!-\! 1| \!>\! \pi_{th}$ $\boldsymbol{\mathrm{or}}$ $|\varepsilon_{2}/\varepsilon_{2}^{th} \!-\! 1| \!>\! \pi_{th}$ $\boldsymbol{\mathrm{and}}$ $k_{2} \!<\! M_{iter}$}{
$\widehat{p}_{2} \leftarrow (p_{2}^{l} + p_{2}^{u})/2$;\\
Determine feasible region $\left[0,\theta_{2}^{max}\left(\widehat{p}_{2},\boldsymbol{\mathrm{\widehat{B}}}\right)\right]$;\\
Determine optimal QoS exponent $\theta_{2}^{\ast}\!\left(\widehat{p}_{2},\boldsymbol{\mathrm{\widehat{B}}}\right)$;\\
Update $\xi_{2} = \mathcal{K}_{2}\big(w_{2}^{th},\widetilde{B}_{2},\boldsymbol{\mathrm{\widetilde{p}}}\big)$ and $\varepsilon_{2} = \boldsymbol{\bar{\epsilon}}\left(\widetilde{B}_{2},\boldsymbol{\mathrm{\widetilde{p}}}\right)$;\\
\eIf{$\xi_{2}/\xi_{2}^{th} > 1$ $\boldsymbol{\mathrm{or}}$ $\varepsilon_{2}/\varepsilon_{2}^{th} > 1$}{
$p_{2}^{l} \leftarrow (p_{2}^{l} + p_{2}^{u})/2$;\\
}
{
$p_{2}^{u} \leftarrow (p_{2}^{l} + p_{2}^{u})/2$;\\
}
$k_{2} \leftarrow k_{2} + 1$;\\
}
\textcolor[rgb]{1.00,0.50,0.75}{\tcp{\!\!\!\!\textbf{Step} \!\!\!\!\!\! \textbf{2:}\!\!\! Optimize $\widehat{p}_{1,2}$ for stream $x_{1,2}$.}}
Use $\widetilde{B}_{2}$ as the input of \textbf{Step 2};\\
Perform similar procedures of \textbf{Step 1:} \textbf{26-37} to solve $\widehat{p}_{1,2}$;\\
\textcolor[rgb]{1.00,0.50,0.75}{\tcp{\!\!\!\!\textbf{Step} \!\!\!\!\!\! \textbf{3:}\!\!\! Optimize $\widehat{p}_{1,1}$ for stream $x_{1,1}$.}}
Use $\widehat{p}_{2}$ and $\widehat{p}_{1,2}$ as the inputs of \textbf{Step 3};\\
Perform similar procedures of \textbf{Step 1: 26-37} to solve $\widehat{p}_{1,1}$;
\end{algorithm}

\begin{figure*}[t]
  \centering
  \begin{minipage}[b]{0.45\textwidth}
    \includegraphics[scale=0.028]{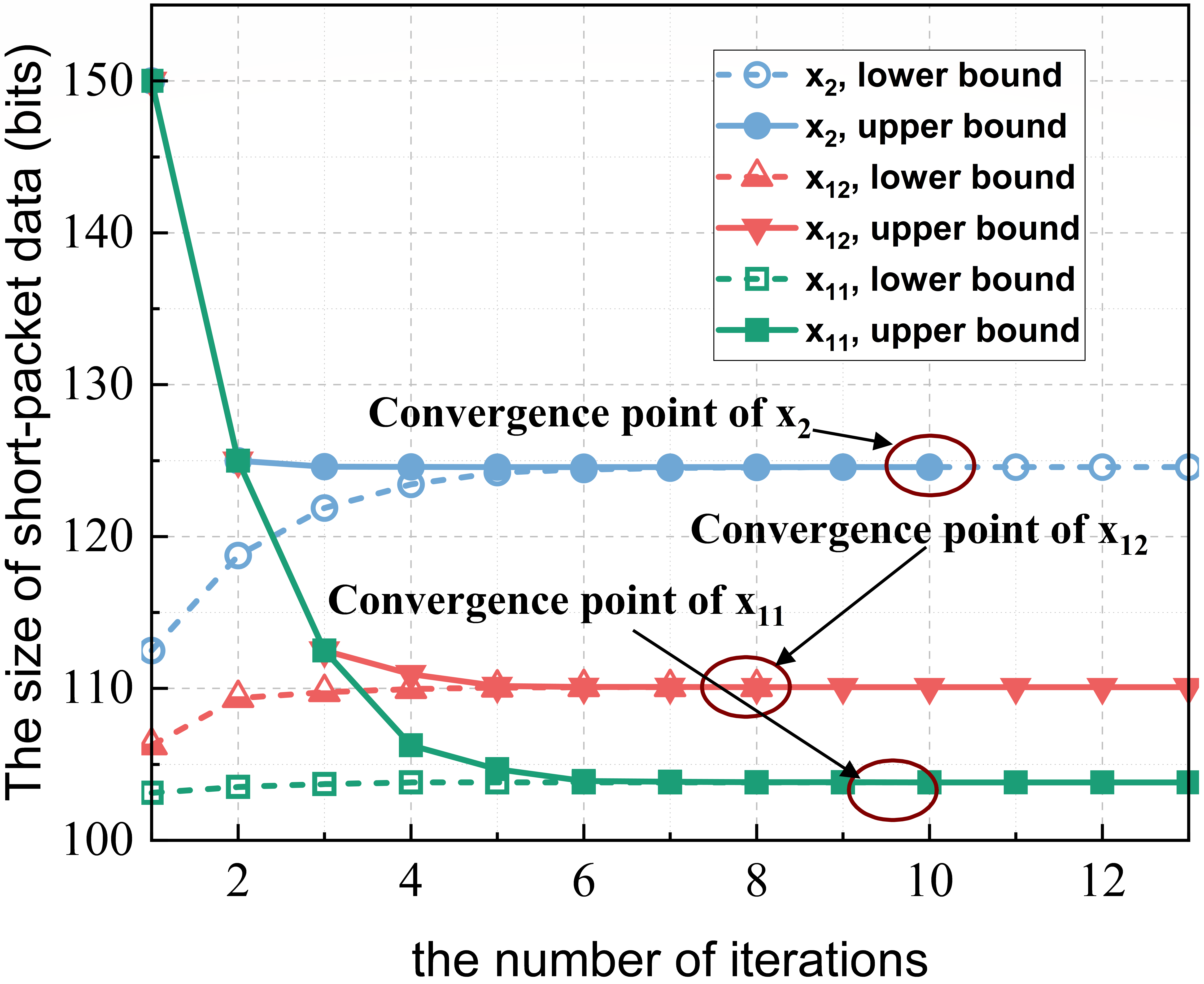}
    \caption{Convergence behavior of short-packet size maximization in the TSSO algorithm.}
    \label{fig_2}
  \end{minipage}
  \vspace{-0.8em}
  \hfill 
  \begin{minipage}[b]{0.45\textwidth}
    \includegraphics[scale=0.029]{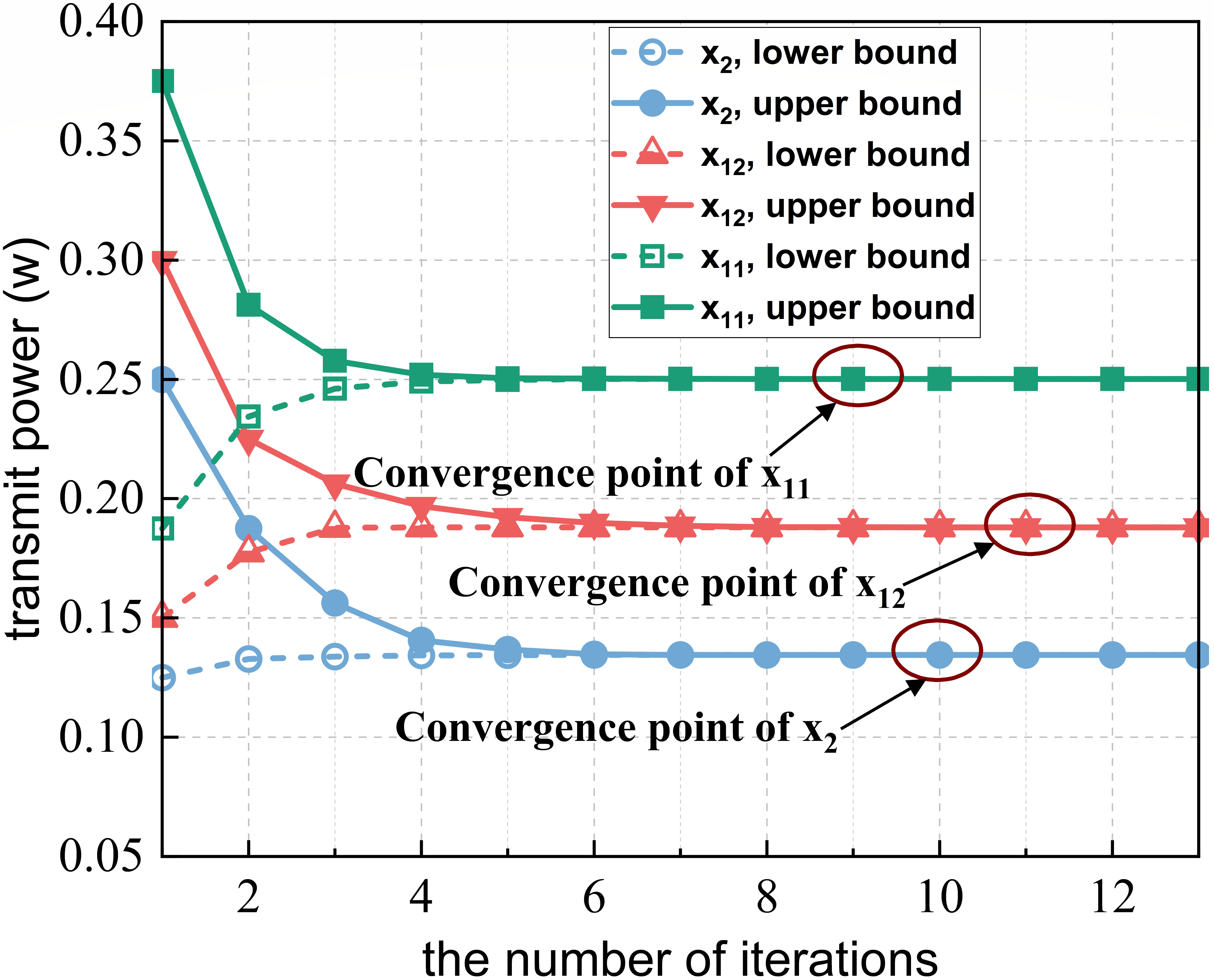}
    \caption{Convergence behavior of transmit power minimization in the TSSO algorithm.}
    \label{fig_3}
  \end{minipage}
  \vspace{-0.8em}
\end{figure*}

\par Following RSMA principles \cite{clerckx2023primer,mao2022rate}, $x_{2}$ is immune to interference, $x_{1,2}$ experiences co-channel interference from $x_{2}$, and $x_{1,1}$ is affected by co-channel interference from both $x_{2}$ and $x_{1,2}$. Based on the distinct characteristics of streams, we propose leveraging sequential optimization techniques to effectively solve $\mathcal{P}1$ and $\mathcal{P}2$ \cite{du2004sequential}. Specifically, we decompose $\mathcal{P}1$ and $\mathcal{P}2$ into three subproblems that need to be tackled sequentially. For \textbf{Subproblem I}, we focus on maximizing(or minimizing) the short-packet size (or transmit power) of stream $x_{2}$. For \textbf{Subproblem II} and \textbf{Subproblem III}, we concentrate on maximizing (or minimizing) the short-packet size (or transmit power) of streams $x_{1,2}$ and $x_{1,1}$, respectively.

\par It can be easily observed that if these three subproblems are optimized sequentially in the reverse decoding order, i.e., $x_{2} \!\rightarrow \!x_{1,2} \!\rightarrow\! x_{1,1}$, $\mathcal{P}1$ and $\mathcal{P}2$ can be successfully decoupled into three independent subproblems \cite{du2004sequential}. For each subproblem, if $\mathcal{K}_{q}\!\left(w_{q}^{th},\boldsymbol{\mathrm{B}},\boldsymbol{\mathrm{p}},\alpha\right) \!\!\leq\!\! \xi_{q}^{th}$ and $\boldsymbol{\bar{\epsilon}}\left(\boldsymbol{\mathrm{B}},\boldsymbol{\mathrm{p}}\right) \!\!\leq\!\! \varepsilon_{q}^{th}$, it indicates that the short-packet size $B_{q}$ (or transmit power $p_{q}$) is too small (or large) and needs to be increase (or reduced); if $\mathcal{K}_{q}\!\left(w_{q}^{th},\boldsymbol{\mathrm{B}},\boldsymbol{\mathrm{p}},\alpha\right) > \xi_{q}^{th}$ or $\boldsymbol{\bar{\epsilon}}_{q}\left(\boldsymbol{\mathrm{B}},\boldsymbol{\mathrm{p}}\right) > \varepsilon_{q}^{th}$, it indicates that the short-packet size $B_{q}$ (or $p_{q}$) is too large (or small) and needs to be reduced (or increased). In this case, we propose a low-complexity algorithm termed the three-step sequential optimization algorithm (TSSO) to effectively tackle $\mathcal{P}1$ and $\mathcal{P}2$. The detailed descriptions of TSSO are outlined in \textbf{Algorithm 1}.

\vspace{-0.5em}

\subsection{Computational Complexity Analysis}

\par \textbf{Algorithm 1} primarily involves the one-dimensional search method and SGD algorithm. The complexity of the one-dimensional search method can be represented as $\mathcal{O}\!\left(\log\!\left(\psi_{\theta}/\psi_{th}\!\right)\right)$, where $\psi_{\theta}$ and $\psi_{th}$ denote the step-length factor and step-length threshold, respectively. The complexity of the SGD method can be denoted as $\mathcal{O}\left(\log(|\nabla \mathcal{K}_{q}\left(w_{q}^{th},\boldsymbol{p},\alpha\right)\Lambda_{s}|/\phi_{th})\right)$, where $\Lambda_{s}$ and $\phi_{th}$ indicate the step-length factor and convergence accuracy, respectively. As a result, the complexity of \textbf{Algorithm 1} is $\mathcal{O}\!\left( \sum_{q\in\mathcal{Q}} k_{q} \big(\!\log\left(\psi_{\theta}/\psi_{th}\right) + \log(|\nabla\mathcal{K}_{q}\!\!\left(w_{q}^{th},\boldsymbol{p},\alpha\right)\Lambda_{s}|/\phi_{th})\big)\!\!\right)$, where $k_{q}$ is the number of iterations for \textbf{Step \emph{i}}, $i \in\left\{1,2,3\right\}$.

\section{Performance Evaluation}

\par In this section, extensive simulations are conducted to demonstrate the effectiveness of the proposed RSMA-xURLLC. The BS service radius is $500$ m, with each subcarrier operating at a bandwidth of $2$ MHz and a time-slot length of $0.5$ ms. Thus, the maximum available blocklength is $N_{0} \!=\! 10^{3}$ CUs. The lengths of the orthogonal pilot sequence is set to $N_{p,1} = N_{p,2} = 50$ CUs. The average arrival rate is $250$ kbps. The maximum uplink transmit power for each stream of IIoT devices is $1W$, the minimum short-packet size is $B_{min} = 80$ bits, and the maximum short-packet size is $B_{max} = 500$ bits. The noise power level is at $\!-176\!$ dBm/Hz, and the path-loss factor is $\!2.5$. Shadow fading is modeled as a lognormal distribution with a standard variance of $8$ dB and small-scale fading follows a Rayleigh fading distribution. The parameters involved in \textbf{Algorithm 1} are as follows: $\pi_{th} \!=\! \phi_{th} \!=\! 10^{-15}$, $M_{iter} \!=\! 200$, $\psi_{\theta} \!=\! 1$, $\psi_{th}\!=\!\Lambda_{s} \!=\! 10^{-6}$.

\vspace{-1em}

\subsection{Convergence Analysis}

\par In Fig. \ref{fig_2} and Fig. \ref{fig_3}, we showcase the superior convergence performance exhibited by the proposed TSSO in tackling $\mathcal{P}1$ and $\mathcal{P}2$, respectively. Fig. \ref{fig_2} reveals that the upper and lower bounds associated with short-packet sizes of streams $x_{2}$, $x_{12}$, and $x_{11}$ rapidly converge as the number of iterations increases. Specifically, the short-packet sizes of streams $x_{2}$, $x_{12}$, and $x_{11}$ converge after $10$, $8$, and $7$ iterations, respectively. Similarly, it can be observed from Fig. \ref{fig_3} that the upper and lower bounds of the required transmit power for streams $x_{2}$, $x_{12}$, and $x_{11}$ also converge rapidly with increasing iterations. In particular, the transmit power of streams $x_{2}$, $x_{12}$, and $x_{11}$ converge after $10$, $11$, and $9$ iterations, respectively. Henceforth, the proposed TSSO algorithm demonstrates exceptional convergence performance, culminating in stable solutions within several iterations. Significantly, stream $x_{11}$ requires the highest transmit power but it can only accommodate the smallest short-packet size. Following closely is stream $x_{12}$, while stream $x_{2}$ demands the lowest transmit power while supporting the largest short-packet data size. These distinctions primarily stem from the distinctive co-channel interference characteristics inherent to streams $x_{2}$, $x_{1,2}$, and $x_{1,2}$ themselves. Notably, $x_{11}$ experiences interference from both streams $x_{2}$ and $x_{12}$ simultaneously.

\vspace{-1em}

\subsection{Validation of SNC-SQP Theoretical Framework}

\par As illustrated in Fig. \ref{fig_4}, we substantiate the dependability of the developed SNC-SQP theoretical framework for capturing the SQP performance. We employ Monte Carlo methods to stochastically generate $10^{8}$ channels, meticulously evaluating the actual SDVP, yielding Sim-SDVP. Subsequently, we conduct a comprehensive comparative analysis by juxtaposing the UB-SDVP with the Sim-SDVP. The precisely calculated slopes of Sim-SDVP and UB-SDVP on the $X$-$log(Y)$ axis for $x_{11}$, $x_{12}$, and $x_{2}$ reveal near-identical slopes within the logarithmic domain. This compellingly demonstrates the accuracy of the developed SNC-based SQP theoretical framework in capturing the SQP performance of our developed RSMA-xURLLC-IIoT network architecture. Moreover, the observable horizontal gap between UB-SDVP and Sim-SDVP primarily arises from the fact that our theoretical framework is grounded in $(min,\times)$-algebra theory \cite{al2014network, fidler2014guide, fidler2010survey}, which can effectively utilize arrival processes and the distribution of fading channels to elucidate the queuing system of our developed RSMA-xURLLC-IIoT network architecture. These numerical results undeniably validate the feasibility of our theoretical framework in converting unavailable SDVP into manageable UB-SDVP, providing profound theoretical guidance for SQP-driven optimization problems. Additionally, the developed RSMA-xURLLC-IIoT network architecture incorporates the statistical QoS provisioning mechanism, which provides a more flexible and QoS-guaranteed short-packet size maximization scheme and transmit power allocation scheme, in contrast to traditional deterministic QoS provisioning mechanisms.

\vspace{-0.5em}

\begin{figure}[h]
\centering
\includegraphics[scale=0.028]{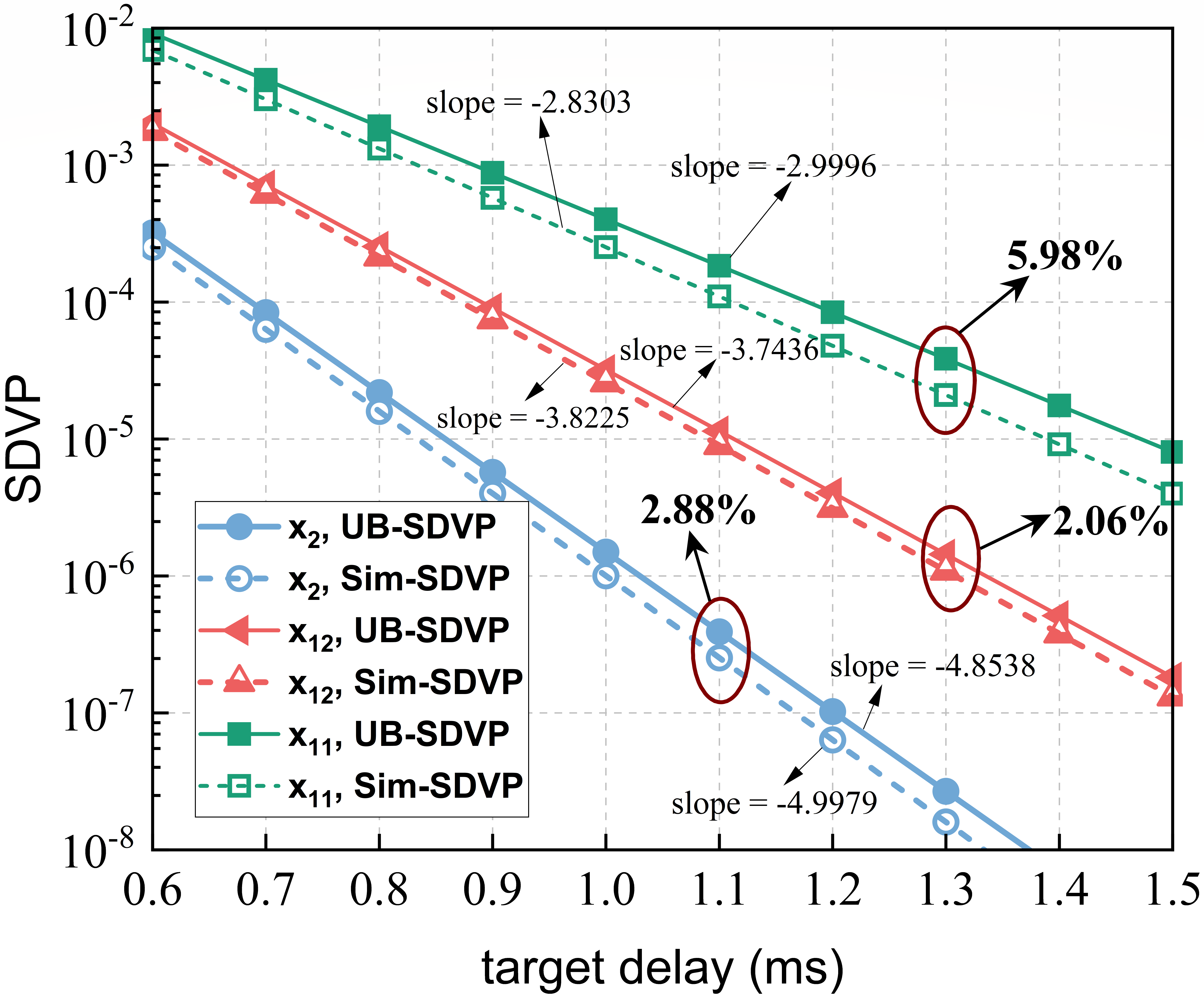}
\caption{Validation of the dependability for the developed SNC-SQP theoretical framework.}
\label{fig_4}
\end{figure}

\vspace{-2em}

\subsection{Comprehensive Performance Comparison}

\par To comprehensively evaluate the effectiveness of our developed RSMA-xURLLC-IIoT network architecture, we conduct an exhaustive performance comparison with state-of-the-art multi-access techniques exploited in current IIoT networks, including NOMA and OMA \cite{Yuang2023When,clerckx2023primer,mao2022rate}.

\par As shown in Fig. \ref{fig_5}, the relationship between the maximum short-packet size and target delay is investigated. We have carefully examined the sequential optimization process of streams $x_{2}$, $x_{12}$, and $x_{11}$ conducted by our proposed TSSO algorithm, the performance of the proposed schemes across varied scenarios, and comparisons with NOMA and OMA. It can be easily seen that as the target delay increases, the maximum short-packet size supported by the developed RSMA-xURLLC-IIoT network architecture improves as the target delay becomes relaxed. Moreover, it can be observed that increasing the allocated blocklength $N_{d}$ or relaxing the SDVP threshold $\xi_{th}$ contributes to enhancing the maximum short-packet size. With $N_{d} = 400$ CUs and $\xi_{th} = 10^{-6}$, the developed RSMA-xURLLC-IIoT network architecture outperforms NOMA and OMA schemes by margins of $21.8 \%$ and $30.2 \%$, respectively.

\vspace{-0.2em}

\begin{figure}[h]
\centering
\includegraphics[scale=0.028]{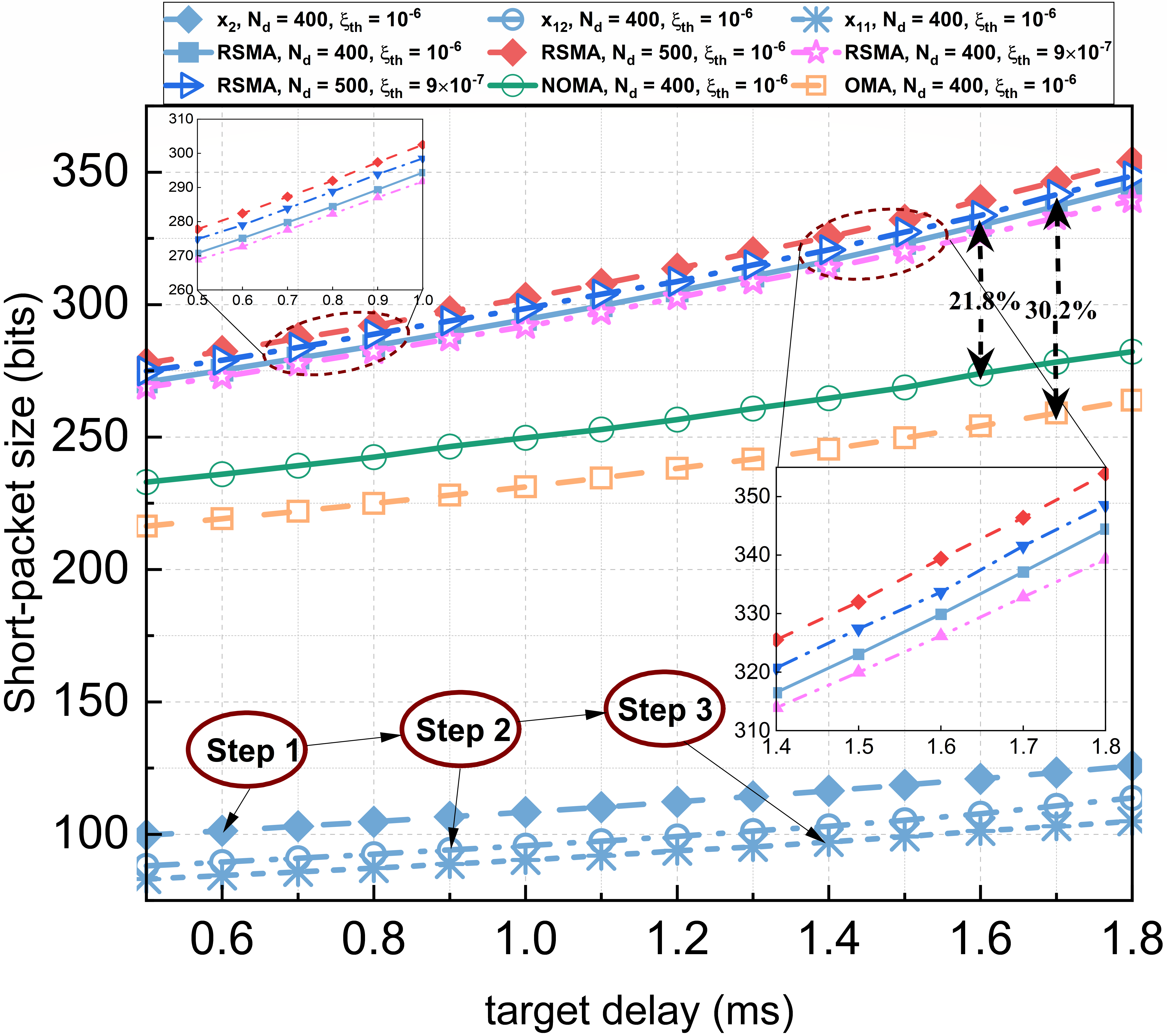}
\caption{The maximum short-packet size versus target delay.}
\label{fig_5}
\end{figure}

\vspace{-0.3em}

\par As depicted in Fig. \ref{fig_6}, the relationship between minimum transmit power and target latency is explored. It is readily apparent that when the target delay is relatively relaxed, the minimum transmit power required to guarantee the expected QoS requirements rapidly decreases. It demonstrates that the minimum required transmit power of our network architecture is sensitive to the target delay. Furthermore, increasing the allocated blocklength and relaxing the SDVP threshold effectively reduce the required minimum transmit power. Under $N_{d} = 400$ CUs and $\xi_{th} \!=\! 9\! \times \! 10^{-7}$, compared to NOMA and OMA schemes, the developed RSMA-xURLLC-IIoT network architecture achieves performance improvements in terms of power consumption of $44.15\%$ and $62.32\%$, respectively.

\vspace{-0.2em}

\begin{figure}[h]
\centering
\includegraphics[scale=0.028]{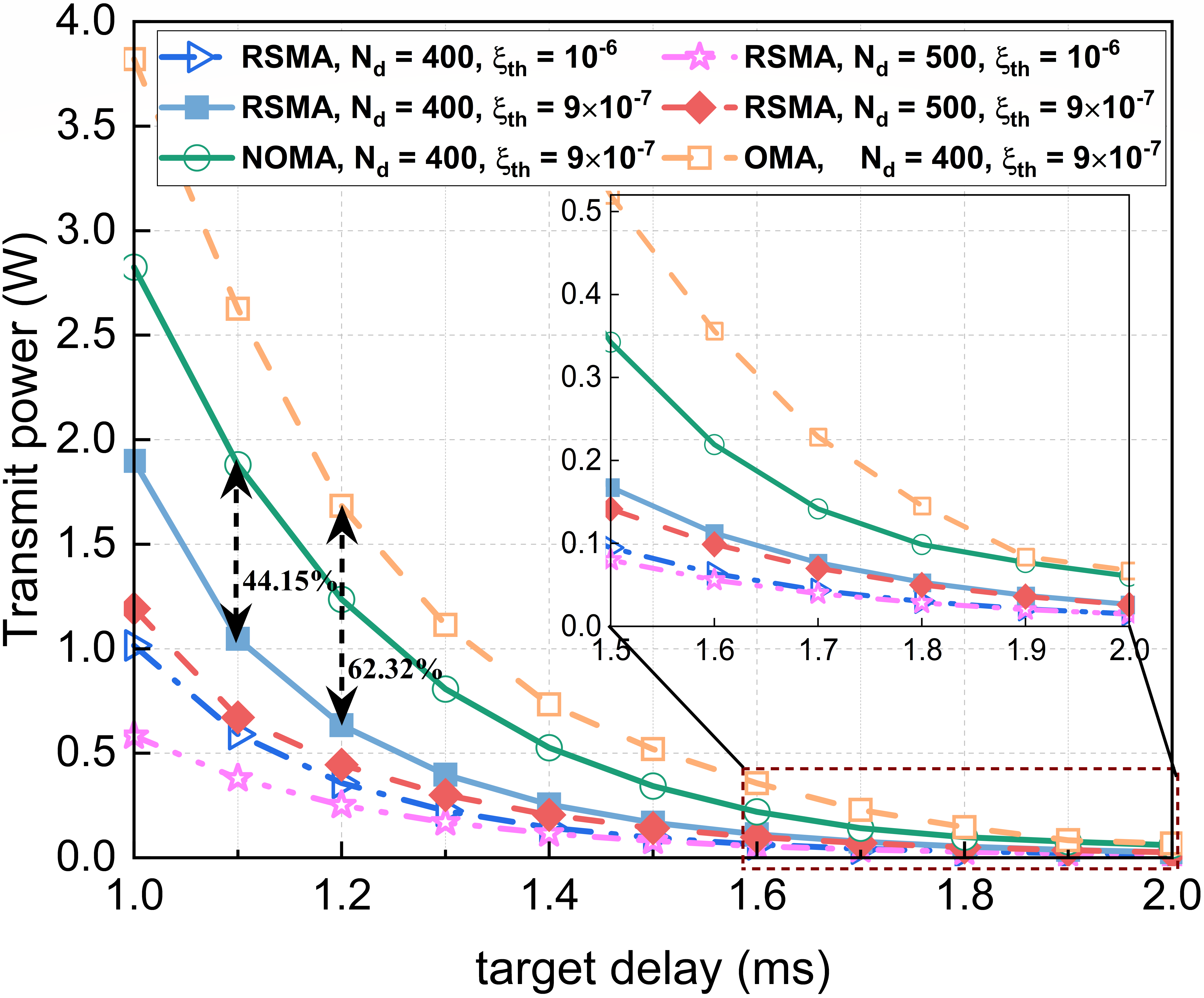}
\caption{The minimum transmit power versus target delay.}
\label{fig_6}
\end{figure}


\begin{figure*}[t]\label{fig_7}
\centering
  \subfigure[]{
   \includegraphics[scale=0.028]{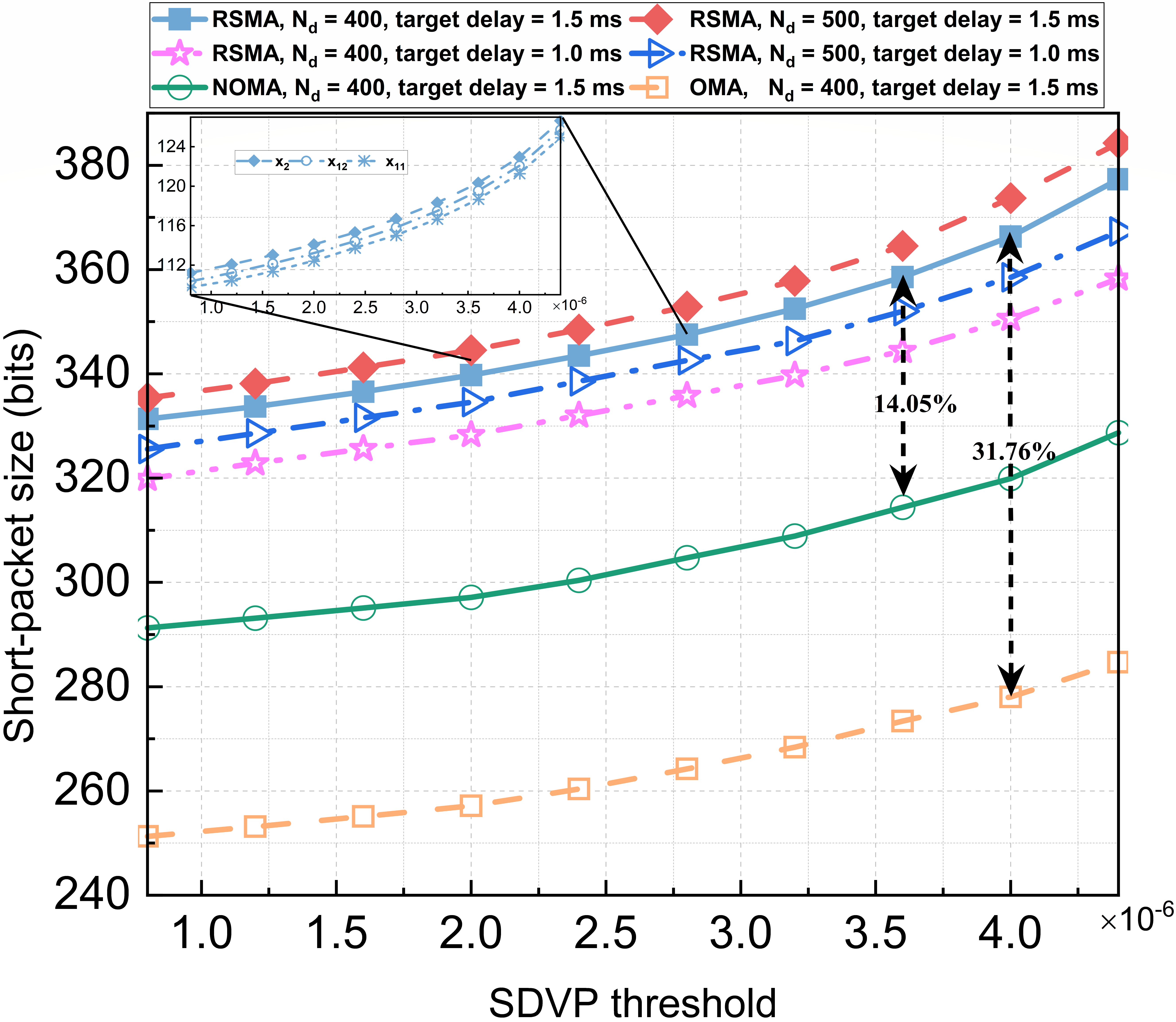}
  }
  \subfigure[]{
   \includegraphics[scale=0.028]{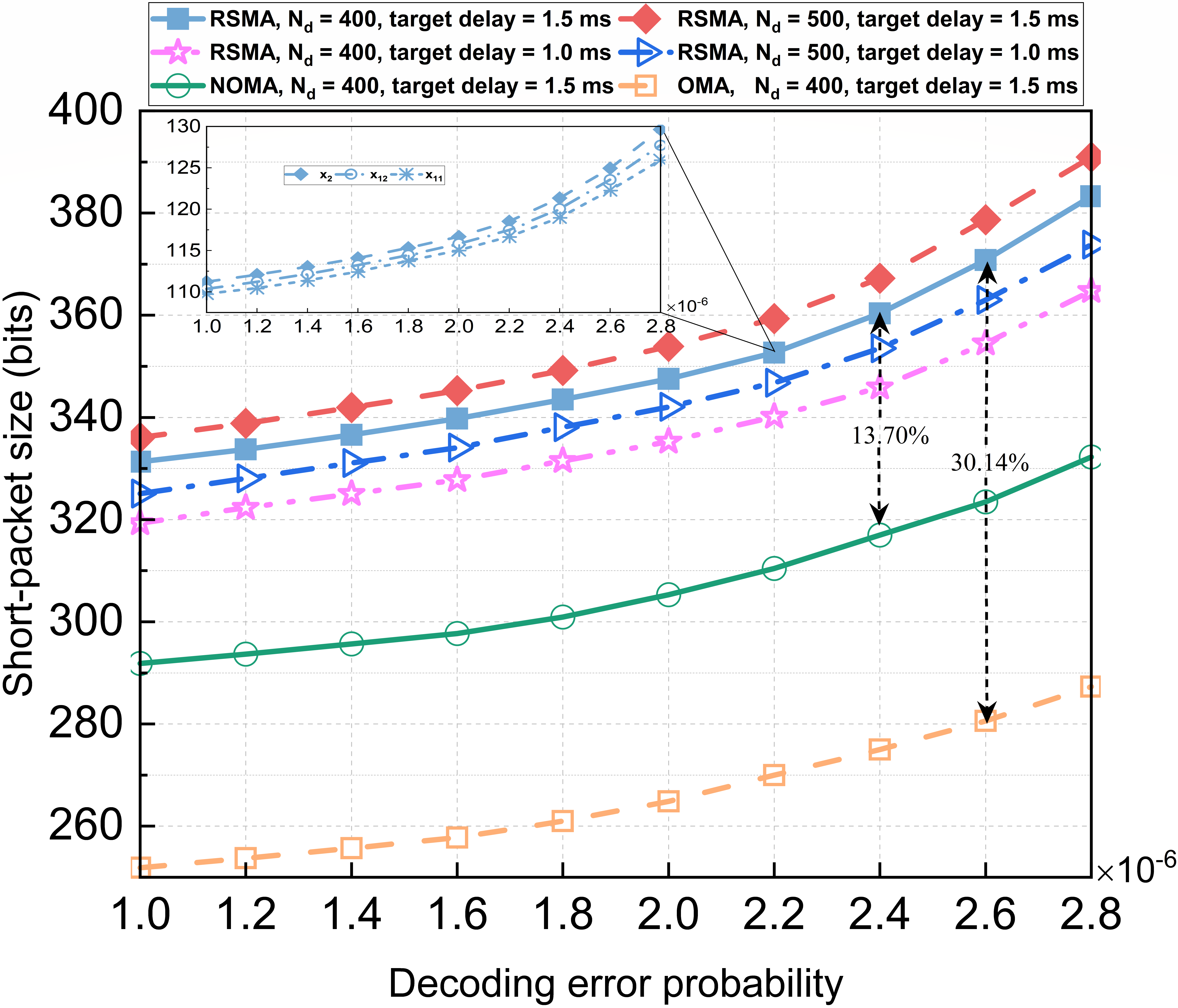}
  }
  \vspace{-0.3em}
  \caption{Performance comparison of different schemes. (a) The maximum short-packet size versus SDVP threshold; (b) The maximum short-packet size versus DEP threshold.}
  \vspace{-0.5em}
\end{figure*}

\vspace{-1em}

\begin{figure*}[t]\label{fig_8}
\centering
  \subfigure[]{
   \includegraphics[scale=0.028]{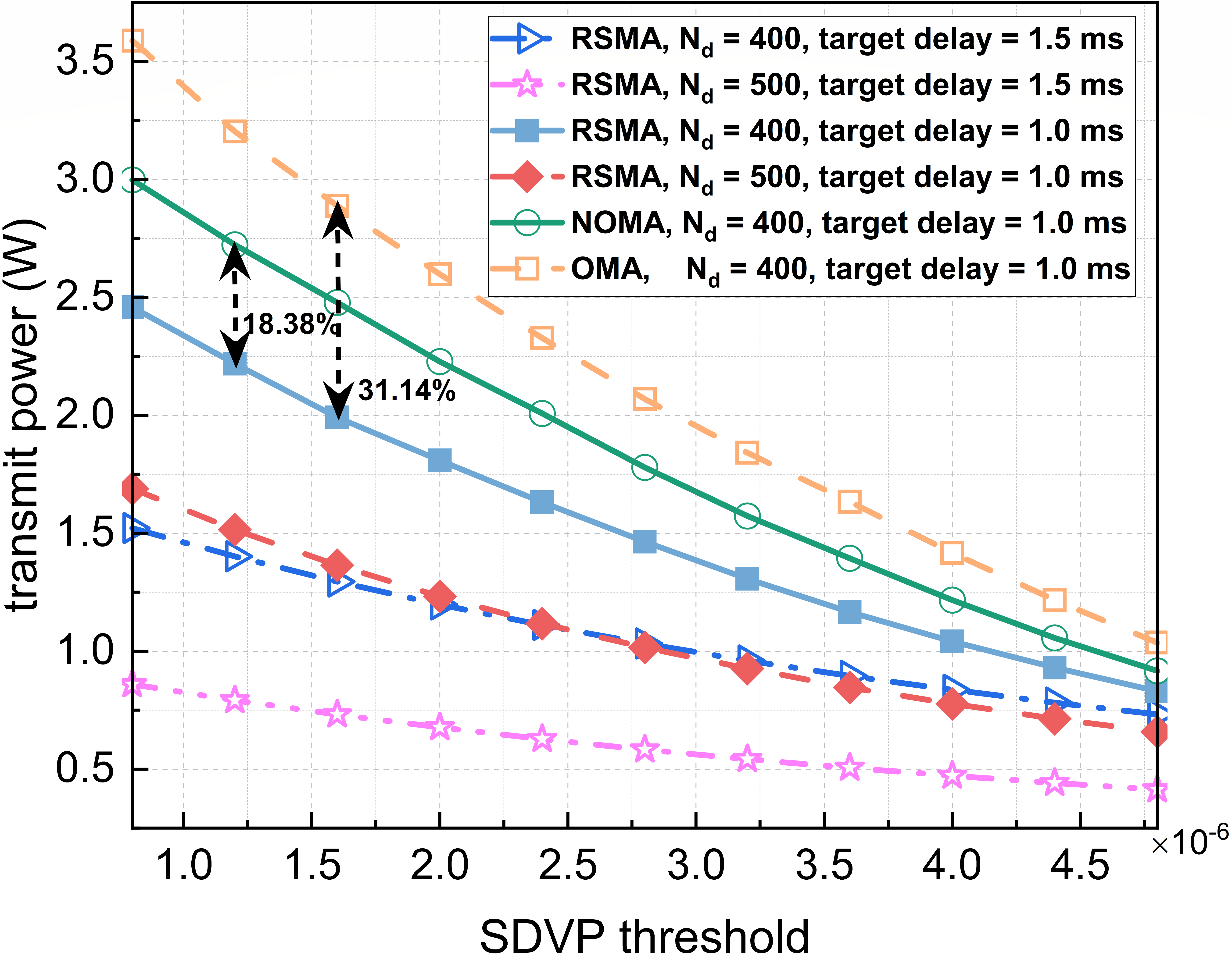}
  }
  \subfigure[]{
   \includegraphics[scale=0.028]{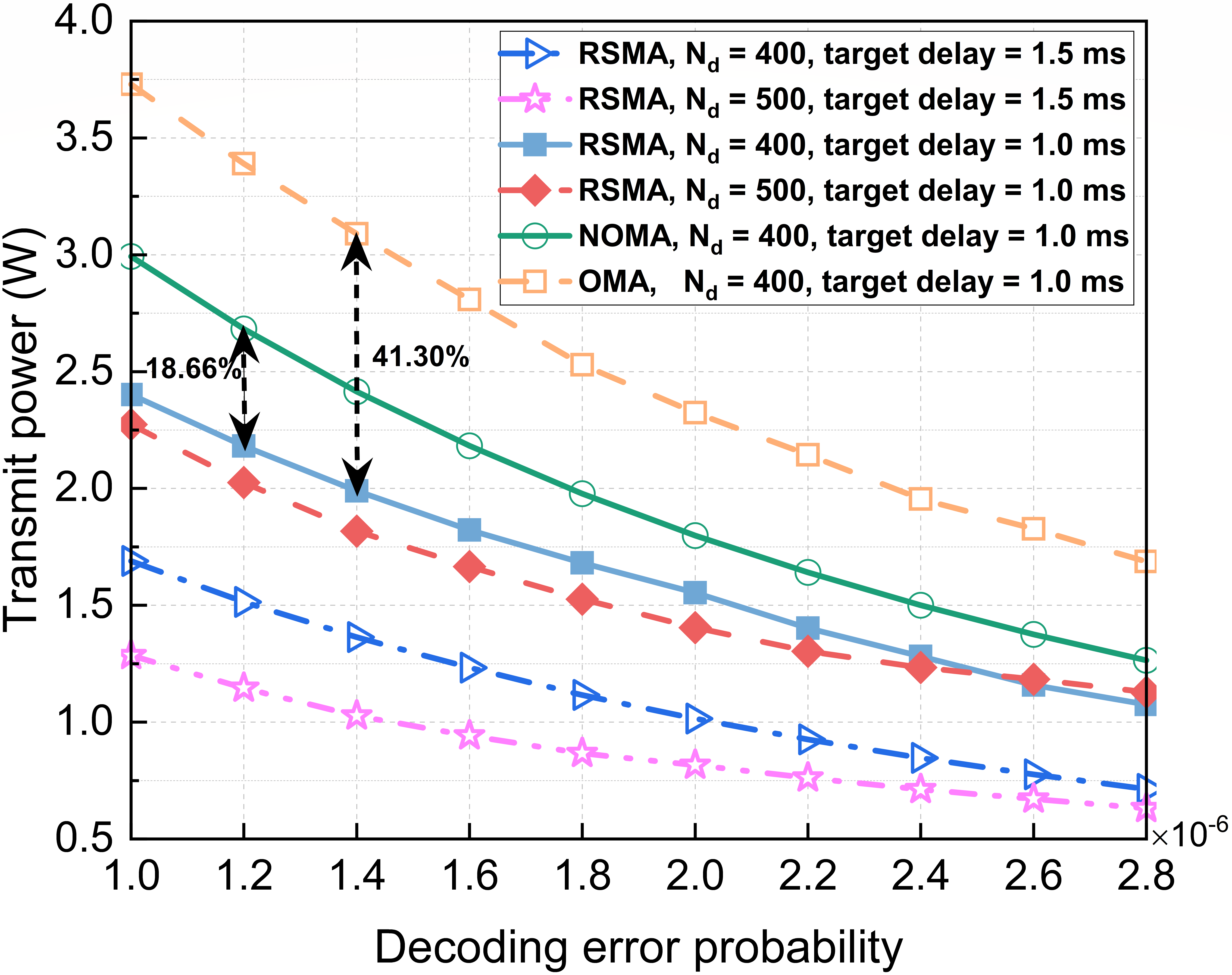}
  }
  \vspace{-0.5em}
  \caption{Performance comparison of different schemes. (a) The minimum transmit power versus SDVP threshold; (b) The minimum transmit power versus DEP threshold.}
  \vspace{-1em}
\end{figure*}

\par In Fig. 7, we investigate the relationship between the maximum short-packet size and the reliability of xURLLC compared to NOMA and OMA schemes across various scenarios. It can be observed that as the SDVP threshold and DEP threshold become more relaxed, the short-packet size supported by our proposed RSMA-xURLLC-IIoT network architecture can be significantly enhanced, which is extremely favorable for accomplishing diverse and complex mission-critical tasks in IIoT networks. Remarkably, $x_{2}$ supports the largest short-packet size, followed by $x_{12}$, and finally $x_{11}$. These distinctions primarily arise from the flexible interference management strategy of the developed RSMA-xURLLC-IIoT network architecture, which splits $x_{1}$ into $x_{11}$ and $x_{12}$ through rate splitting. As a result, $x_{11}$, decoded first, experiences the most severe interference, including co-channel interference from $x_{2}$ and $x_{12}$. Subsequent decoding of $x_{12}$ only experiences co-channel interference from $x_{2}$. In contrast, the last decoded $x_{2}$ remains free from co-channel interference. It ultimately results in distinct co-channel interference characteristics for $x_{2}$, $x_{12}$, and $x_{11}$. Despite more complex interference management in our proposed RSMA-xURLLC-IIoT network architecture, numerical results consistently underscore the remarkable performance advantages. In Fig. 7 (a) and Fig. 7 (b), we provide details of the sequential optimization process performed by the proposed TSSO. Numerical results distinctly demonstrate that, under different SDVP thresholds, our developed RSMA-xURLLC-IIoT network architecture can achieve $14.05\%$ and $31.76\%$ performance gains in terms of augmenting short-packet size compared to NOMA and OMA schemes, respectively. Moreover, as depicted in Fig. 7 (b), we further emphasize that under varying DEP thresholds, our developed RSMA-xURLLC-IIoT network architecture realizes performance enhancements of $13.7\%$ and $30.14\%$ when compared with NOMA and OMA schemes, respectively.

\vspace{-0.7em}

\par As depicted in Fig. 8, we have presented the minimum transmit power required by our developed RSMA-xURLLC-IIoT network architecture compared to NOMA and OMA schemes across various scenarios. It can be observed that as the SDVP threshold and DEP threshold become more lenient, the transmit power required to guarantee the expected SQP performance decreases significantly. Fig. 8 (a) shows that for varying SDVP thresholds, our developed RSMA-xURLLC-IIoT network architecture is capable of conserving $18.38\%$ and $31.14\%$ of the transmit power in terms of improving the power consumption for short-packet transmissions when compared to NOMA and OMA schemes, respectively. As depicted in Fig. 8 (b), with different DEP thresholds, our developed RSMA-xURLLC-IIoT network architecture outperforms NOMA and OMA schemes by achieving $18.66\%$ and $41.30\%$ performance improvements in terms of power consumption for short-packet transmission, respectively.

\par In Fig. 7 and Fig. 8, the fact-filled comparative analyses against NOMA and OMA schemes highlight that our developed RSMA-xURLLC-IIoT network architecture not only facilitates larger short-packet sizes but also achieves superior energy efficiency. This primarily stems from the fact that our developed network architecture empowers multiple access with the potential for concurrent non-orthogonal interference management by partially treating interference as noise while partially decoding it, thereby maximizing interference management optimally. In contrast, NOMA and OMA schemes exhibit disadvantages from two critical perspectives. On the one hand, the NOMA scheme essentially epitomizes an extreme interference management paradigm, which treats interference fully as noise and completely decodes interference \cite{Yuang2023When,clerckx2023primer,mao2022rate}. This severely limits the flexible concurrent non-orthogonal interference management empowerment for multi-access IIoT networks, which in turn leads to poor spectral efficiency and more radio resource overhead. On the other hand, although the OMA scheme can facilitate interference-free multiple access through orthogonal resource allocation, it results in diminished spectrum efficiency. From Fig. 7 and Fig. 8, it can be observed that this deficiency becomes particularly pronounced for xURLLC services with stringent SQP requirements. Therefore, the orthogonal resource allocation in the OMA scheme is unsuitable for multi-access IIoT networks, which intuitively indicates that the OMA scheme can only support relatively small short packets and requires more transmit power to fulfill the expected SQP performance.


\section{Conclusion And Future Outlook}
\par In this paper, we have embarked on a pioneering exploration of xURLLC in Industrial 5.0 and beyond networks by investigating an innovative RSMA-xURLLC-IIoT network architecture. Leveraging SNC theory, we have proposed the SNC-SQP theoretical framework, which is precision-engineered to unveil dependable insights into the SQP analysis for our developed RSMA-xURLLC-IIoT network architecture. Building upon this theoretical framework, we have formulated two SQP-driven optimization problems that hold immense significance in massive-access IIoT networks. To engineer efficient resolution, we have harnessed sequential optimization techniques to propose a low-complexity TTSO algorithm. Through Monte-Carlo methods, we have thoroughly verified the dependability of the proposed SNC-SQP theoretical framework. Additionally, through extensive comparison analyses with prevalent NOMA and OMA schemes, we have further corroborated the superior performance gains achievable by our developed RSMA-xURLLC-IIoT network architecture.

\par For future endeavors, we have deeply contemplated four intimately intertwined perspectives. Firstly, we intend to expand our proposed system model to encompass massive MIMO scenarios, aiming to further elevate network performance \cite{10414053}. Secondly, recognizing age-of-information (AoI) as another critical metric in mission-critical IIoT networks, we plan to incorporate the SQP analysis of AoI for xURLLC traffic into our forthcoming research. Moreover, under FBL regimes, the transmission of machine instructions entails potential security risks \cite{Yuang2023When}. Therefore, the tailored physical layer security mechanisms crafted for our developed RSMA-xURLLC-IIoT network architecture are paramount to guarantee the privacy, stability, and integrity of data \cite{10460318}. Henceforth, another significant facet of our future work entails drafting relevant security metrics for xURLLC, such as false alarm probability and missed detection probability, to combat malicious attacks and data breaches \cite{8643949}. In essence, our proposed SNC-SQP theoretical framework requires further refinement, necessitating the inclusion of additional QoS indicators into the formulated SQP-driven optimization problems, along with corresponding algorithm enhancements.

\vspace{-0.5em}

\begin{appendices}
\section{Proof of Theorem 1}
\setcounter{equation}{0}
\renewcommand\theequation{A-\arabic{equation}}
\vspace{-0.3em}
   According to the definitions of MGF and deconvolution operator $\oslash$ \cite{10382447,fidler2014guide}, we have

\vspace{-0.5em}

   \begin{equation}\label{A1}
     \mathbb{P}\left(W_{q}\left(t\right) > w_{q}^{th} \right) \leq \mathbb{P}\left(\left(A_{q}\oslash S_{q}\right)\left(t+w_{q}^{th},t\right)\geq 0\right).
   \end{equation}

   According to Chernoff's bound for (\ref{A1}) \cite{fidler2014guide,fidler2010survey}, for any random process $X$, the inequality $\mathbb{P}\left(X \geq x\right) \leq e^{-\theta x} \mathbb{M}_{X}\left(\theta\right)$ holds for $\forall x > 0$. By substituting it into \ref{A1}, we can immediately derive that

\vspace{-0.5em}

   \begin{equation}\label{A2}
     \begin{aligned}
       \mathbb{P}\left(W_{q}\left(t\right) > w_{q}^{th} \right) & \leq \mathbb{P}\left(\left(A_{q} \oslash S_{q}\right)\left(t+w_{q}^{th},t\right)\geq 0\right)\\
       & \leq \inf_{\theta > 0} \mathbb{M}_{A_{q} \oslash S_{q}}\left(\theta, t + w_{q}^{th}, t\right).
     \end{aligned}
   \end{equation}

   Based on the definition of MGF, the right-hand side of (\ref{A2}) satisfies

\vspace{-0.5em}

   \begin{equation}\label{A3}
      \begin{aligned}
        & \mathbb{M}_{A_{q} \oslash S_{q}}\left(\theta, t + w_{q}^{th}\right)\\
        & \leq \sum\limits_{u = 0}^{t + w_{q}^{th}} \!\! \mathbb{M}_{A_{q}}\!\left(\theta,u,t\right) \cdot \overline{\mathbb{M}}_{S_{q}}\!\left(\theta,u,t+w_{q}^{th}\right).
      \end{aligned}
   \end{equation}

   Referring to (\ref{e6}) and (\ref{A3}), we can obtain that

\vspace{-0.5em}

   \begin{equation}\label{A4}
      \inf_{\theta > 0} \mathbb{M}_{A_{q} \oslash S_{q}}\left(\theta, t + w_{q}^{th}, t\right) \leq \inf_{\theta > 0} \mathbb{M}_{A_{q} \widehat{\oslash} S_{q}}\left(\theta, t + w_{q}^{th}, t\right).
   \end{equation}

   According to (\ref{A1})-(\ref{A4}), the proof of \textbf{Theorem 1} can be concluded.

\vspace{-0.5em}

\section{Proof of Lemma 1}
\setcounter{equation}{0}
\renewcommand\theequation{B-\arabic{equation}}
\par According to (\ref{e6}), stream $x_{2}$ is immune to co-channel interference, thus $\widehat{\Gamma}_{2} \! \sim \! \mathcal{N}\left(\widehat{\gamma}_{2},\widehat{\sigma}_{q}^{2}\right)$, and the closed-form expression of PDF for $\widehat{\Gamma}_{2}$ can be expressed as follows:
\begin{equation}\label{B1}
   f_{\widehat{\gamma}_{2}}(x) = \frac{1}{\widehat{\sigma}_{2}\sqrt{2\pi}}\exp\left\{-\frac{(x-\widehat{\gamma}_{2})^{2}}{2\widehat{\sigma}_{2}^{2}}\right\}.
\end{equation}

\par Based on (\ref{e7}), stream $x_{1,2}$ experiences co-channel interference from stream $x_{2}$. The c.d.f. of $\widehat{\Gamma}_{1,2}$ can be given as follows:

\vspace{-0.5em}

\begin{equation}\label{B2}
  \begin{aligned}
     F_{\widehat{\Gamma}_{1,2}}\!\!\left(x\right) & = \mathbb{P}\!\left\{\!\frac{\Gamma_{1,2}}{\Gamma_{2} + 1} \!\leq\! x\!\right\} = \mathbb{P}\big\{\Gamma_{1,2} \!\leq \! x\left(1 \!+\! \Gamma_{2}\right)\!\big\}\\
     & = \int_{0}^{\infty} \!\! F_{\Gamma_{1,2}}\!\left(x \!+\! x\Gamma_{2}\right) \cdot f_{\Gamma_{2}}\!\left(\Gamma_{2}\right) \! \d \Gamma_{2}.
  \end{aligned}
\end{equation}

\par Taking the derivative of (\ref{B2}) with respect to $x$, yields:

\vspace{-0.5em}

\begin{equation}\label{B3}
   \begin{aligned}
      & f_{\widehat{\Gamma}_{1,2}}\!\left(x\right) = \frac{\partial F_{\widehat{\Gamma}_{1,2}}\left(x\right)}{\partial x} \\
      & = \int_{0}^{\infty}\left(1 + \Gamma_{2}\right)\cdot f_{\Gamma_{1,2}}\left(x + x\Gamma_{2}\right)\cdot f_{\Gamma_{2}} \left(\Gamma_{2}\right) \d \Gamma_{2}\\
      & = \int_{0}^{\infty} \!\!\! \frac{1+y}{2\pi \widehat{\sigma}_{1,2}\widehat{\sigma}_{2}}\cdot \exp\!\!\left\{\!-\!\left(\!\!\frac{\left(\!\! x+xy-\widehat{\gamma}_{1,2}\right)^{2}}{2\widehat{\sigma}_{1,2}^{2}} \!+\! \frac{\left(y-\widehat{\gamma}_{2}\right)^{2}}{2\widehat{\sigma}_{2}^{2}}\!\!\right)\!\!\right\}\d y.
   \end{aligned}
\end{equation}

\par Since $\Gamma_{1,2}\!\sim\!\mathcal{N}\!\left(\widehat{\gamma}_{1,2},\widehat{\sigma}_{1,2}^{2}\right)$ and $\Gamma_{2}\!\sim\!\mathcal{N}\!\left(\widehat{\gamma}_{2},\widehat{\sigma}_{2}^{2}\right)$, we can obtain that $\Gamma_{1,2} + \Gamma_{2} \!\sim\! \mathcal{N}\!\left(\widehat{\gamma}_{1,2}+\widehat{\gamma}_{2},\widehat{\sigma}_{1,2}^{2}+\widehat{\sigma}_{2}^{2}\right)$. Following a process similar to (\ref{B1})-(\ref{B3}), we can obtain that

\vspace{-0.5em}

\begin{equation}\label{B4}
   \begin{aligned}
     f_{\widehat{\Gamma}_{1,1}}(x) & = \int_{0}^{\infty}\frac{1+y}{2\pi\widehat{\sigma}_{1,1}\widehat{\sigma}_{\sum}}\cdot\\
     & \exp\left\{\!-\!\left(\frac{\left(x+xy-\widehat{\gamma}_{1,1}\right)^{2}}{2\widehat{\sigma}_{1,1}^{2}}\!+\!\frac{\left(y-\widehat{\gamma}_{\sum}^{2}\right)^{2}}{2\widehat{\sigma}_{\sum}^{2}}\!\right)\!\!\right\}\!\!\d y,
   \end{aligned}
\end{equation}
where $\widehat{\sigma}_{\sum}^{2} = \widehat{\sigma}_{1,2}^{2} + \widehat{\sigma}_{2}^{2}$ and $\widehat{\gamma}_{\sum} = \widehat{\gamma}_{1,2} + \widehat{\gamma}_{2}$. So the proof of \textbf{Lemma 1} is concluded.

\vspace{-1em}

\section{Proof of Corollary 1}
\setcounter{equation}{0}
\renewcommand\theequation{C-\arabic{equation}}

\vspace{-0.5em}

\par We first prove the convexity property of the stability condition (\ref{e16}b). The stability condition (\ref{e16}b) can be rewritten as follows

\vspace{-0.5em}

\begin{equation}\label{C1}
   \begin{aligned}
      & S(\theta_{q}) = \mathbb{M}_{a_{q}}(\theta_{q})\cdot \overline{\mathbb{M}}_{s_{q}}(\theta_{q}) \\
      & = \mathbb{E}\left[e^{a_{q}\theta_{q}}\right]\cdot\mathbb{E}\left[e^{-\theta_{q}s_{q}}\right]\\
      & = \mathbb{E}\left[e^{-\theta_{q}\mathcal{X}_{q}}\!\right] < 1,
   \end{aligned}
\end{equation}
where the independent property between $a_{q}$ and $s_{q}$ is utilized, and the variable substitution $\mathcal{X}_{q} = s_{q} - a_{q}$ is exploited. According to the definition of convex functions, we can obtain that $\forall \theta_{q}^{1} \neq \theta_{q}^{2}$ and $\forall 0 \leq \varrho \leq 1$, where $\theta_{q}^{1}, \theta_{q}^{2} \in (0,\theta_{q}^{max})$, the following inequality must hold:

\begin{equation}\label{C2}
   \begin{aligned}
     & \mathbb{E}\left[e^{-\left(\varrho \theta_{q}^{1} + (1-\varrho) \theta_{q}^{2}\right)\mathcal{X}_{q}}\!\right] \\
     & \leq \varrho \mathbb{E}\left[e^{-\varrho\theta_{q}^{1}\mathcal{X}_{q}}\right] + (1-\varrho)\mathbb{E}\left[e^{-\varrho \theta_{q}^{2}\mathcal{X}_{q}}\right].
   \end{aligned}
\end{equation}

\par According to Hölder's inequality, we can derive that
\begin{equation}\label{C3}
   \begin{aligned}
     & \mathbb{E}\left[e^{-\left(\varrho\theta_{q}^{1} + (1-\varrho)\theta_{q}^{2}\right)\mathcal{X}_{q}}\right] = \mathbb{E}\left[\left|e^{-\varrho\theta_{q}^{1}\mathcal{X}_{q}}\cdot e^{-(1-\varrho)\theta_{q}^{2}\mathcal{X}_{q}}\!\right|\right] \\
     & \leq \left(\mathbb{E}\left[\left|e^{-(1-\varrho)\theta_{q}^{2}\mathcal{X}_{q}}\right|^{1/1-\varrho}\right]\right)^{1-\varrho} \cdot \left(\mathbb{E}\left[\left|e^{-\varrho \theta_{q}^{1}\mathcal{X}_{q}}\!\right|^{1/\varrho}\right]\right)^{\varrho} \\
     & = \left(\mathbb{E}\left[e^{-\theta_{q}^{1}\mathcal{X}_{q}}\right]\right)^{\varrho} \cdot \left(\mathbb{E}\left[e^{-\theta_{q}^{2}\mathcal{X}_{q}}\right]\right)^{1-\varrho}\\
     & \overset{(a)}{\leq} \varrho \mathbb{E}\left[e^{-\varrho\theta_{q}^{1}\mathcal{X}_{q}}\right] + (1-\varrho)\mathbb{E}\left[e^{-\varrho \theta_{q}^{2}\mathcal{X}_{q}}\right],
   \end{aligned}
\end{equation}
where the inequality (a) in (\ref{C3}) holds due to the fact that

\begin{equation}\label{C4}
   x_{1}^{\varrho}x_{2}^{1-\varrho} \leq \varrho x_{1} + (1-\varrho) x_{2}, \forall 0 \leq \varrho \leq 1.
\end{equation}

\par As a result, we define the function $f(\varrho) = \varrho x_{1}+(1-\varrho) x_{2} - x_{1}^{\varrho}x_{2}^{1-\varrho}$. $\forall \varrho \in [0,1]$ and $x_{2},x_{2} < 1$, the second-order derivation $f^{''}(\varrho) = - x_{1}^{\varrho}x_{2}^{1-\varrho}\left(\log(x_{2}) - \log(x_{1})\right)^{2} \leq 0$. Since $f(0) = f(1) = 0$ and $f^{''}(\varrho) \leq 0$, it follows that $f(\varrho)$ reaches a local maximum for $\varrho \in [0,1]$. Thus, $\forall \varrho \in [0,1]$, $f(\varrho) \geq 0$ and (\ref{C4}) holds. Therefore, we show that the stability condition is convex.

\par Secondly, we prove the convexity property of the UB-SDVP (\ref{e16}a). According to the convexity property of stability condition, in the feasible domain $(0,\theta_{q}^{max})$, it follows that $1 - S(\theta_{q})$ is a concave and positive function. Hence, its reciprocal is convex, i.e., $\frac{1}{1-S(\theta_{q})}$ \cite{boyd2004convex}. According to the definition of convex function, $\forall \theta_{q}^{1}, \theta_{q}^{2} \in (0,\theta_{q}^{max})$ and $0 \leq \varrho \leq 1$, we have

\vspace{-0.3em}

\begin{equation}\label{C5}
   \begin{aligned}
      & \frac{1}{1 - \mathbb{E}\left[e^{\left(\varrho \theta_{q}^{1} + (1-\varrho)\theta_{q}^{2}\right)a_{q}}\right]\cdot \mathbb{E}\left[e^{-\left(\varrho\theta_{q}^{1}+ (1-\varrho)\theta_{q}^{2}\right)s_{q}}\right]} \leq \\
      & \frac{\varrho}{1-\mathbb{E}\left[e^{\theta_{q}^{1}a_{q}}\right] \cdot  \mathbb{E}\left[e^{-\theta_{q}^{1}s_{q}}\right]} + \frac{1-\varrho}{1-\mathbb{E}\left[e^{\theta_{q}^{2}a_{q}}\right]\cdot \mathbb{E}\left[e^{-\theta_{q}^{2}s_{q}}\right]}.
   \end{aligned}
\end{equation}

\par By multiplying both the left and right sides of the inequality (\ref{C5}) by $\left(\!\mathbb{E}\!\!\left[e^{-(\varrho \theta_{q}^{1} + (1-\varrho)\theta_{q}^{2}\!)s_{q}}\!\right]\!\right)^{w_{q}^{\ast}}$, we derive that

\begin{equation}\label{C6}
  \begin{aligned}
    & \frac{\left(\mathbb{E}\left[e^{-(\!\varrho \theta_{q}^{1} + (1-\varrho)\theta_{q}^{2})s_{q}}\right]\right)^{w_{q}^{\ast}}}{1- \mathbb{E}\left[e^{\left(\varrho\theta_{q}^{1} + (1-\varrho)\theta_{q}^{2}\right)a_{q}}\right]\cdot \mathbb{E}\left[e^{-\left(\varrho\theta_{q}^{1} + (1-\varrho)\theta_{q}^{2}\right)s_{q}}\right]} \leq \\
    & \frac{\varrho \left(\mathbb{E}\left[e^{-(\varrho \theta_{q}^{1} + (1-\varrho)\theta_{q}^{2})s_{q}}\right]\right)^{w_{q}^{\ast}}}{1 -\mathbb{E}\left[e^{\theta_{q}^{1}a_{q}}\right]\cdot \mathbb{E}\left[e^{-\theta_{q}^{1}s_{q}}\right]} + \frac{1-\varrho}{1-\mathbb{E}\left[e^{\theta_{q}^{2}a_{q}}\right]\cdot \mathbb{E}\left[e^{-\theta_{q}^{2}s_{q}}\!\right]}\\
    & \times \left(\mathbb{E}\left[e^{-(\varrho \theta_{q}^{1} + (1-\varrho)\theta_{q}^{2})s_{q}}\right]\right)^{w_{q}^{\ast}}
  \end{aligned}
\end{equation}

\par By using Hölder’s inequality, we can derive that

\begin{equation}\label{C7}
   \begin{aligned}
     & \left(\!\mathbb{E}\!\left[e^{-(\varrho \theta_{q}^{1} + (1-\varrho)\theta_{q}^{2})s_{q}}\!\!\right]\!\right)^{\!\!w_{q}^{\ast}}\!=\! \left(\!\!\mathbb{E}\!\!\left[\left|e^{-\varrho \theta_{q}^{1}s_{q}}\!\right|\right] \! \cdot \! \mathbb{E}\!\!\left[\left|e^{-(1-\varrho) \theta_{q}^{2}s_{q}}\right|\right]\right)^{\!\!w_{q}^{\ast}}\\
     & \leq \left(\mathbb{E}\left[\left|e^{-(1-\varrho) \theta_{q}^{2}s_{q}}\right|^{1/1-\varrho}\right]\right)^{w_{q}^{\ast}(1-\varrho)} \!\! \times \!\! \left(\mathbb{E}\left[\left|e^{-\varrho \theta_{q}^{1}s_{q}}\right|^{1/\varrho}\right]\right)^{w_{q}^{\ast}\varrho}\\
     & = \left(\mathbb{E}\left[\left|e^{-\theta_{q}^{1}s_{q}}\right|\right]\right)^{w_{q}^{\ast}\varrho} \cdot \left(\mathbb{E}\left[\left|e^{-\theta_{q}^{2}s_{q}}\right|\right]\right)^{w_{q}^{\ast}(1-\varrho)}\\
     & \leq \left(\mathbb{E}\left[\left|e^{-\theta_{q}^{1}s_{q}}\right|\right] \cdot \mathbb{E}\left[\left|e^{-\theta_{q}^{2}s_{q}}\right|\right]\right)^{w_{q}^{\ast}},
   \end{aligned}
\end{equation}

\par Due to the fact that $\left(\mathbb{E}\left[\left|e^{-\theta_{q}s_{q}}\right|\right]\right)^{w_{q}^{\ast}} > 0$. Then, the following inequality applies to the right side of (\ref{C6}), and we can derive that
\begin{equation}\label{C8}
\setstretch{0.95}
  \begin{aligned}
     & \frac{\varrho \! \left(\mathbb{E}\!\!\left[e^{-(\varrho \theta_{q}^{1} + (1-\varrho)\theta_{q}^{2})s_{q}}\!\right]\right)^{w_{q}^{\ast}}}{1 \!-\! \mathbb{E}\!\! \left[e^{\theta_{q}^{1}a_{q}}\right] \! \cdot \! \mathbb{E}\!\!\left[e^{-\theta_{q}^{1}s_{q}}\!\right]} \!+\! \frac{(1\!-\!\varrho)\!\left(\mathbb{E}\!\!\left[e^{-(\varrho \theta_{q}^{1} + (1-\varrho)\theta_{q}^{2})s_{q}}\!\right]\!\right)^{w_{q}^{\ast}}}{1\!-\!\mathbb{E}\!\left[e^{\theta_{q}^{2}a_{q}}\right]\!\cdot\! \mathbb{E}\!\left[e^{-\theta_{q}^{2}s_{q}}\!\right]} \\
     & \leq \!\! \Biggl(\!\! \frac{\varrho \! \left(\mathbb{E}\left[\big|e^{-\theta_{q}^{1} s_{q}}\big|\right] \! \cdot \! \mathbb{E}\left[\big|e^{-\theta_{q}^{2}s_{q}}\big|\right]\right)^{w_{q}^{\ast}}}{1 -\mathbb{E}\left[e^{\theta_{q}^{1} a_{q}}\right]\cdot \mathbb{E}\left[ e^{-\theta_{q}^{1} s_{q}}\right]} \! + \! \frac{(1\!-\!\varrho)}{1\!-\!\mathbb{E}\!\!\left[e^{\theta_{q}^{2}a_{q}}\!\right]\!\cdot\! \mathbb{E}\!\!\left[\!e^{-\theta_{q}^{2}s_{q}}\!\right]}\\
     & \times \!\! \left(\mathbb{E}\!\!\left[\big|e^{-\theta_{q}^{1}s_{q}}\big|\right] \! \cdot \! \mathbb{E}\!\!\left[\big|e^{-\theta_{q}^{2}s_{q}}\big|\right]\right)^{\!\!w_{q}^{\ast}} \!\!\! \Biggl) \leq \frac{\varrho \left(\mathbb{E}\left[\big|e^{-\theta_{q}^{1}s_{q}}\big|\right]\right)^{w_{q}^{\ast}}}{1-\mathbb{E}\left[e^{\theta_{q}^{1}a_{q}}\right] \cdot \mathbb{E}\left[e^{-\theta_{q}^{1}s_{q}}\right]} \\
     & + \frac{(1-\varrho)\left(\mathbb{E}\left[\big|e^ {-\theta_{q}^{1}s_{q}}\big|\right]\cdot \mathbb{E}\left[\big|e^{-\theta_{q}^{2}s_{q}}\big|\right]\right)^{w_{q}^{\ast}}.}{1-\mathbb{E}\left[e^{\theta_{q}^{2}a_{q}}\right]\cdot \mathbb{E}\left[e^{-\theta_{q}^{2}s_{q}}\right]}
  \end{aligned}
\end{equation}

\par Combining (\ref{C6}), (\ref{C7}), and (\ref{C8}), as well as the definition of convex functions, we can finally prove that the kernel function $K_{q}(\theta_{q},-w_{q}^{\ast})$ is convex. So \textbf{Corollary 1} can be concluded.
\end{appendices}

\footnotesize
\bibliographystyle{IEEEtran}
\bibliography{IEEEabrv,reference_globecom}
\end{document}